\documentclass[10pt,journal]{IEEEtran}
\usepackage{amssymb}
\usepackage{amsmath}
\usepackage{cite}
\usepackage{url}
\usepackage{xcolor}
\usepackage{cite,graphicx,amsmath,amssymb}
\usepackage{fancyhdr}
\usepackage{mdwmath}
\usepackage{mdwtab}
\usepackage{caption}
\usepackage{amsthm}
\usepackage{setspace}
\usepackage{hyperref}
\usepackage{algorithm}
\usepackage{algorithmic}
\usepackage{multirow}
\usepackage{makecell}
\usepackage{mathtools}
\usepackage{subcaption}
\usepackage{bm}
\usepackage{tikz}
\usepackage{xurl}
\usepackage{stfloats}
\usepackage{mathrsfs}

\usepackage{booktabs}
\usepackage{multirow}
\usepackage{siunitx}
\usepackage{balance}

\hypersetup{colorlinks=true,
linkcolor=blue,
citecolor=blue,      
urlcolor=black,
}

\graphicspath{{./src/img/}}

\DeclareMathOperator*{\argmax}{arg\,max}

\newtheorem{remark}{Remark}
\newtheorem{theorem}{Theorem}

\newtheorem{lemma}{Lemma}

\newtheorem{assumption}{Assumption}

\newtheorem{corollary}{Corollary}

\newtheorem{proposition}{Proposition}

\allowdisplaybreaks
\NewDocumentCommand{\multiubrace}{mmm}
 {
  \egreg_multiubrace:nnn {#1} {#2} {#3}
 }

\expandafter\def\expandafter\normalsize\expandafter{%
	\normalsize%
	\setlength\abovedisplayskip{4pt}%
	\setlength\belowdisplayskip{4pt}%
	\setlength\abovedisplayshortskip{2pt}%
	\setlength\belowdisplayshortskip{2pt}%
}
\begin{document}
\title{SiFo: Wireless Foundation Model for Low-Overhead Site-Specific CSI Feedback}
\author{{Cheng-Jie Zhao~\IEEEmembership{Graduate Student Member, IEEE}, Zhaolin Wang,~\IEEEmembership{Member, IEEE}, \\ Zongyao Zhao,~\IEEEmembership{Member, IEEE}, and Yuanwei Liu,~\IEEEmembership{Fellow, IEEE}}
	\vspace{-0.4cm}
\thanks{The authors are with Department of Electrical and Computer Engineering, The University of Hong Kong, Hong Kong (e-mail: chengjie\_zhao@connect.hku.hk, zhaolin.wang@hku.hk, zongyao@hku.hk, yuanwei@hku.hk).}
}
\maketitle

\begin{abstract}
SiFo, a wireless foundation model-based framework, is proposed for low-overhead site-specific channel state information (CSI) feedback. In 3GPP NR, Type-II feedback provides an expressive codebook-based CSI representation, but it requires substantial reference-signal overhead, UE-side search, and feedback. Learning-based site-specific feedback can reduce these online costs while retaining high-quality subspace representation by exploiting deployment-dependent propagation structure. However, existing site-specific designs typically train a dedicated neural network for each new site, which limits scalability when the number of deployments is large. SiFo addresses this scalability issue by pretraining a CSI feedback model across source sites and adapting it to a target site through lightweight calibration. A small set of target-site users reports low-dimensional reference signal received power (RSRP) fingerprints, and their full-CSI-based subspace labels are stored as calibration memory. During online operation, a served user is matched to calibrated users through the same SSB probing and RSRP reporting procedure, so nearby calibration samples provide site-specific subspace guidance without updating model parameters. SiFo therefore transfers common propagation knowledge while retaining local adaptation. Numerical results across ten city scenarios demonstrate that SiFo (i) achieves higher CSI-capture efficiency than separately trained site-specific learning baselines under the same target-site labeled budget, (ii) approaches the high-overhead 3GPP NR Type-II feedback reference using only RSRP measurements collected during online SSB probing, and (iii) converts the high CSI-capture efficiency and low overhead into effective spectral efficiency improvement under limited target-site data.
\end{abstract}

\begin{IEEEkeywords}
Limited CSI feedback, site-specific beamforming, wireless foundation models
\end{IEEEkeywords}

\section{Introduction}
\IEEEPARstart{W}{ireless} foundation models are being developed as general-purpose backbones for learning-enabled wireless systems. This direction is driven by the increasing use of learning-based modules in a wide range of wireless applications, e.g., channel acquisition, beam management, and network orchestration. In practical deployments, these modules are expected to operate across heterogeneous carrier bands, antenna arrays, protocol configurations, and propagation environments. Such heterogeneity exposes a scalability limitation of independently trained models, because labeled channel-data collection, model retraining, and deployment-specific validation may need to be repeated whenever the radio environment changes. Wireless foundation models therefore shift the design emphasis from isolated task- or site-specific learners toward reusable wireless representations that can be specialized to new deployments with limited additional data.
\\\indent
In the broad sense of \cite{Bommasani2021Foundation}, a foundation model is pretrained on large and diverse data and can later be adapted efficiently to downstream environments or tasks. Existing wireless foundation model studies have explored this principle first from network intelligence and channel modeling perspectives. The authors of \cite{WirelessLLM2024} proposed WirelessLLM, which exploits large language models for wireless intelligence and network orchestration. A space-time-frequency wireless foundation model, named WiFo, was developed in \cite{WiFo2024} for transferable channel prediction. The large wireless model in \cite{Alikhani2024LWM} learned contextualized channel embeddings for downstream communication and sensing tasks. For multi-task wireless communication and sensing, the authors of \cite{Yang2025WirelessGPT} introduced WirelessGPT through generative pretraining, and \cite{Sheng2025WirelessFM} further extended wireless foundation models to multi-task prediction.
\\\indent
Another line of work moves foundation modeling closer to physical-layer signal representations and protocol-relevant tasks. To support sensing, communication, and localization, \cite{Aboulfotouh2025WavesFM} proposed WavesFM as a multi-modal wireless foundation model. From the spectrum-management perspective, SpectrumFM was developed in \cite{Zhou2026SpectrumFM} to learn transferable spectrum representations. At the signal-representation level, the authors of \cite{Mashaal2026IQFM} introduced IQFM, which learns reusable encoders directly from raw multi-antenna IQ streams. As a further step toward CSI feedback, \cite{Liu2026WiFoCF} proposed WiFo-CF, a foundation model tailored to heterogeneous channel dimensions, feedback rates, and data distributions. These studies demonstrate the potential of pretraining and transfer across network-level intelligence, channel prediction, and signal representation, spectrum management. How to use a wireless foundation model for low-overhead, protocol-constrained, and limited-feedback CSI acquisition remains less explored.
\\\indent
Low-overhead limited-feedback CSI acquisition has traditionally been handled through structured codebook-based schemes \cite{Love2008LimitedFeedback,Fu2023TutorialCodebooks}. In the 3rd Generation Partnership Project (3GPP) New Radio (NR) framework, Type-I, Type-II, and enhanced Type-II port-selection codebook (PSC) schemes are representative limited-feedback options \cite{3GPP38214}. Type-I has low feedback overhead but limited single-beam representation capability in rich multipath channels. Type-II provides the most expressive multi-beam representation over an oversampled angular dictionary, but its gain comes with heavier RS acquisition, user equipment (UE)-side search, and feedback processing. PSC provides a structured alternative, yet its representation capability remains constrained by the selected port domain. Together, these schemes expose the fundamental trade-off between affordable overhead and high-quality channel representation. 
\\\indent
Learning-based methods have been explored to improve the tradeoff among overhead, complexity, and performance. Early studies enhanced conventional beam-management and feedback procedures through learned codebooks. The authors of \cite{Dreifuerst2024MLCodebook} developed machine-learning-based codebooks for NR initial access and CSI Type-II feedback, and later extended this idea to neural codebook design for broader multiple-input multiple-output (MIMO) beam-management tasks \cite{Dreifuerst2025NeuralCodebook}. More recent works moved beyond generic codebook learning by exploiting site-specific propagation structure, including site-specific probing, RSRP-based codebook design, and channel knowledge maps \cite{Heng2022SiteSpecificProbing,Ning2023RSRPCodebook,Wu2024CKMBeamforming,Zeng2024CKMTutorial}. In particular, site-specific beamforming (SSBF) frameworks showed that coarse power fingerprints can be used to synthesize site-conditioned beams \cite{Heng2024GridFree, SSBFMAG2,sim}, while site-specific Type-II (SST2) feedback incorporated \emph{site-specific information (SSI)} into the conventional Type-II (Conv-T2) feedback pipeline to reduce online overhead and UE complexity without sacrificing the expressive Type-II representation \cite{Zhao2026UnifiedLF}. Together, these studies demonstrate the value of learning from the deployment environment. However, when such methods are extended to large-scale network deployment, their site-dependent training process becomes difficult to scale: a dedicated neural model may need to be trained and validated for each new site, and modern wireless networks may contain a very large number of deployment sites.
\\\indent
This scalability issue motivates a closer connection between wireless foundation models and site-specific learning. At first glance, the two directions appear to pursue opposite design philosophies: wireless foundation models emphasize reusable structures shared across tasks and deployments, whereas site-specific learning emphasizes adaptation to localized radio propagation for site-specific performance. These two goals can be made compatible if the feedback rule is decomposed into a reusable cross-site component and a lightweight site-conditioned component. Cross-site pretraining can learn common beam-domain sensing and subspace priors, while target-site calibration can supply the local propagation evidence that is not transferable across sites. The desired design is therefore neither a purely site-agnostic model nor a separately trained model for every site.
\\\indent
This paper studies foundation model-based site-specific CSI feedback in the Type-II setting. Type-II feedback is chosen because it provides the most expressive NR codebook-based CSI representation, and hence exposes the key difficulty of inferring a phase-aware CSI subspace from low-dimensional RSRP observations. The RSRP fingerprint contains only coarse amplitude information, whereas the desired subspace depends on multipath geometry, angular spread, phase, and scattering structure. Therefore, similar RSRP observations may correspond to different subspace decisions in different deployment sites. Direct cross-site pretraining provides a useful marginal prior, but it does not by itself recover the target-site relation from RSRP fingerprints to CSI subspaces.
\\\indent
Based on these observations, this paper proposes SiFo, a calibration-aided wireless foundation model design for low-overhead site-specific CSI feedback. The pretrained model provides a shared SSB probing codebook and a parametric subspace prior, while the target site is represented by a small calibration memory. Each calibrated UE is associated with an RSRP fingerprint and an offline full-CSI-based projector. During online operation, nearby calibrated UEs in the pretrained RSRP space provide local subspace evidence for the served UE, enabling target-site specialization without updating model parameters. The main contributions are summarized as follows:
\begin{itemize}
	\item We establish a foundation-model view of site-specific CSI feedback. In the Type-II setting, this formulation identifies the uncertainty in inferring CSI subspaces from RSRP fingerprints as the main obstacle to direct cross-site transfer and motivates SiFo as a combination of cross-site pretraining, target-site calibration, and online subspace retrieval.
	\item We develop a target-site calibration principle for low-overhead CSI acquisition. The analysis shows that RSRP sensing during SSB probing should provide discriminative directional-power fingerprints, while offline full-CSI-based projector labels supply the phase-aware subspace information missing from RSRP measurements.
	\item We design a projector-based acquisition rule for gradient-free target-site specialization. Online UEs are matched to calibrated UEs in the pretrained RSRP coordinates, and retrieved projectors are averaged and fused with the parametric predictor to produce the prescribed rank-$Q$ CSI representation subspace.
	\item We demonstrate the data and overhead efficiency of SiFo through ten-city simulations. Without target-site fine-tuning, SiFo outperforms single-site SST2 baselines under identical target-site data budgets, approaches the high-overhead Conv-T2 reference with much lower online interaction, and converts CSI-capture gains into effective-rate improvement.
\end{itemize}

The rest of this paper is organized as follows. Section~\ref{sec:sys_model} presents the system model. Section~\ref{sec:from_feedback_to_fm_sst2} formulates SiFo. Sections~\ref{sec:sensing_criteria} and~\ref{sec:site_sensing_encoding} develop calibration and projector-memory acquisition. Section~\ref{sec:backbone_deployment} gives the deployment protocol. Section~\ref{sec:results} reports simulations, and Section~\ref{sec:conclusion} concludes the paper.

\indent \textit{Notation:} Scalars, vectors, matrices, and sets are denoted by italic, boldface lowercase, boldface uppercase, and calligraphic letters, respectively. $\mathbb C$ and $\mathbb R$ denote the complex and real fields. $(\cdot)^T$ and $(\cdot)^H$ denote transpose and Hermitian transpose. For vectors, $\|\cdot\|_2$ denotes the Euclidean norm. For matrices, $\|\cdot\|_2$ and $\|\cdot\|_F$ denote the spectral and Frobenius norms. $\operatorname{span}(\cdot)$ denotes column space, $\mathbf I_N$ is the $N\times N$ identity matrix, and $\operatorname{tr}(\cdot)$, $\operatorname{diag}(\cdot)$, and $\operatorname{rank}(\cdot)$ denote trace, diagonal vector, and rank. $\mathbb E[\cdot]$ denotes expectation, and $\mathcal{CN}(\boldsymbol{\mu},\boldsymbol{\Sigma})$ denotes the circularly symmetric complex Gaussian distribution.

\section{System Model and Problem Formulation} \label{sec:sys_model}
We consider a single-cell downlink communication system, where a BS equipped with $N_t$ transmit antennas serves UEs deployed in a site indexed by $s$. The channel distribution is determinated by the radio propagation environments and therefore varies significantly across different sites. For each site, the channel is assumed to be block-fading, i.e., approximately constant within one coherence interval, during which SSB probing, CSI-RS transmission, and UE feedback are performed, followed by data transmission.
\vspace{-0.3cm}
\subsection{Channel and Signal Model}
The BS is assumed to employ a uniform linear array (ULA) with inter-element spacing $d$. We adopt the standard narrowband geometric channel model, where the channel is represented as a superposition of several dominant propagation paths \cite{Ayach2014SpatiallySparse}. For UE $q$ in site $s$, the channel vector can be expressed as follows
\begin{equation}\label{eq:channel_model_new}
    \mathbf h_{s,q}=\sum_{\ell=1}^{L_{s,q}} \alpha_{s,q,\ell}\mathbf a(u_{s,q,\ell}) = \mathbf{A}_{s,q}\boldsymbol{\alpha}_{s,q} \in\mathbb C^{N_t \times 1},
\end{equation}
where $L_{s,q}$ is the number of propagation paths, $\alpha_{s,q,\ell} \in\mathbb C$ is the complex path gain of the $\ell$-th path, and $\mathbf a(u_{s,q,\ell})$ is the corresponding normalized steering vector. For a ULA, the steering vector is given by
\begin{equation}
    \mathbf a(u)=\frac{1}{\sqrt{N_t}}\left[1,e^{j2\pi u},\ldots,e^{j2\pi(N_t-1)u}\right]^T \in\mathbb C^{N_t \times 1},
\end{equation}
where $u$ is the spatial frequency. Specifically, for the $\ell$-th path of UE $q$ in site $s$, we define $u_{s,q,\ell} \triangleq d \sin(\varphi_{s,q,\ell}) / \lambda$, where $\lambda$ is the carrier wavelength, and $\varphi_{s,q,\ell}$ is the angle of departure (AoD) from the BS to UE $q$. In addition, $\mathbf A_{s,q}=[\mathbf a(u_{s,q,1}),\ldots,\mathbf a(u_{s,q,L_{s,q}})] \in \mathbb{C}^{N_t \times L_{s,q}}$ and $\boldsymbol{\alpha}_{s,q}=[\alpha_{s,q,1},\ldots,\alpha_{s,q,L_{s,q}}]^T \in \mathbb{C}^{L_{s,q} \times 1}$ collect the steering vectors and path gains, respectively.

\begin{assumption}[\emph{Near-orthogonal angular paths}]
	\label{assump:near_orthogonal_paths}
	\normalfont
	The UE lies in the far field of the BS array, and the dominant path spatial frequencies $\{u_{s,q,\ell}\}_{\ell=1}^{L_{s,q}}$ are sufficiently separated relative to the array aperture. Therefore, the steering vectors of the dominant paths are approximately orthogonal, i.e.,
	\begin{equation}
		\mathbf A_{s,q}^H\mathbf A_{s,q} \approx \mathbf I_{L_{s,q}}.
		\label{eq:near_orthogonal_paths}
	\end{equation}
	This approximation is consistent with the favorable-propagation behavior of large antenna arrays with resolvable angular paths \cite{Ngo2014Favorable,Larsson2014MassiveMIMO}.
\end{assumption}

\indent Let $\mathbf{w}\in\mathbb{C}^{N_t \times 1}$ and $x$ denote the unit-power beamformer and symbol transmitted by the BS, respectively. Then, the received signal at the UE q in site s in given by
\begin{equation} \label{pdsch}
	y = \sqrt{P_s} \mathbf{h}_{s,q}^H\mathbf{w}x + n,
\end{equation}
where $P_s$ is the transmit power at site $s$, and $n\sim\mathcal{CN}(0,\sigma_{n,s}^2)$ is additive noise independent of the transmitted signal.
\vspace{-0.3cm}
\subsection{Beam Alignment Procedure}
In practical NR systems, beam alignment refers broadly to the procedure of identifying suitable beam directions or a low-dimensional beam-space representation for subsequent CSI acquisition and downlink transmission \cite{3GPP38214,3GPP38211,3GPP38215,Giordani2019BeamManagementNR}. The procedure typically starts with SSB beam sweeping, where the BS transmits a predefined set of synchronization beams $\mathcal{B}_s = \{\mathbf b_{s,k}\}_{k=1}^{K}$, and the UE measures their RSRP fingerprint $\mathbf r_{s,q}(\mathbf B_s)$ for initial access and coarse directional acquisition\footnote{Current NR systems typically rely on predefined SSB beams, e.g., DFT codebook. However, SSBF studies have shown that this codebook can be optimized w.r.t. the particular propagation environment in a certain site, hence we attach the site index to $\mathbf B_s$ to allow site-dependent SSB codebooks \cite{sim,Zhao2026UnifiedLF}.}. $\mathbf B_s=[\mathbf b_{s,1}, \ldots, \mathbf b_{s,K}]$ is a matrix collecting the SSB beams. The number of SSB beams $K$ is assumed fixed across sites, while their beam coefficients can vary across different site to adapt to the deployment environment \cite{sim}.
\\\indent
The SSB transmission follows the model in~\eqref{pdsch}, where $\mathbf{w}$ is specified as the SSB beams in $\mathbf B_s$ and $x$ denotes the transmitted SSB symbol. For each SSB beam, the UE averages the received signal power over the corresponding time-frequency SSB resources. The resulting RSRP value is therefore a beam-domain power measurement. Collecting the $K$ measurements over all SSB beams gives the RSRP fingerprint $\mathbf r_{s,q}(\mathbf B_s)$, which is reported by UE $q$ in a decibel (dB) form. Following \cite{sim}, the dB-domain RSRP fingerprint can be modeled as
\begin{equation} \label{RSRP}
	\mathbf{r}_{s, q} (\mathbf{B}_s) = \mathcal{M}(\mathbf{h}_{s,q},\mathbf{B}_s) + \mathbf{n}_{s,q} = \bar{\mathbf{r}}_{s, q} (\mathbf{B}_s) + \mathbf{n}_{s,q},
\end{equation}
where $\mathcal M(\cdot)$ denotes the RSRP measurement operator, $\bar{\mathbf{r}}_{s, q}(\mathbf{B}_s)$ is the noise-free RSRP, and $\mathbf{n}_{s,q}$ collects Gaussian noise. The detailed expressions of $\bar{\mathbf{r}}_{s, q}(\mathbf{B}_s)$ and $\mathbf{n}_{s,q}$ are given in Section II of \cite{sim}.
\\\indent
The RSRP fingerprint obtained at the SSB stage provides only coarse beam-level observations and is generally insufficient for high-resolution downlink precoding. To support data transmission, the system usually proceeds to a CSI refinement stage with three steps \cite{3GPP38214,3GPP38211,3GPP38215}: 1) the BS transmits CSI-RS on configured CSI-RS resources, 2) the UE estimates the downlink channel representation from the CSI-RS measurements, and 3) instead of feeding back the full channel vector, the UE reports a finite-dimensional CSI description, e.g., codebook indices, selected subspace, or low-dimensional coefficients, to support the BS's final beamforming decision due to limited channel resources. According to the analysis in \cite{Zhao2026UnifiedLF}, the effect of such a limited-feedback rule can be represented by the spatial subspace used for CSI-RS precoding. For a feedback rule $\psi$, let $\mathbf U_{s,q}^{(\psi)}$ denote the selected subspace basis vectors produced for UE $q$ in site $s$. The corresponding representing subspace is
\begin{equation}
	\mathcal U_{s,q}^{(\psi)} \triangleq \operatorname{span}(\mathbf U_{s,q}^{(\psi)}).
\end{equation}
Assuming that $\mathbf U_{s,q}^{(\psi)}$ has full column rank, the orthogonal projector onto this subspace is given by
\begin{equation}
	\mathbf P_{s,q}^{(\psi)} = \mathbf U_{s,q}^{(\psi)} \left( \left(\mathbf U_{s,q}^{(\psi)}\right)^{H}\mathbf U_{s,q}^{(\psi)} \right)^{-1} \left(\mathbf U_{s,q}^{(\psi)}\right)^{H}.
\end{equation}
The \emph{CSI-capture efficiency} of this subspace is defined as
\begin{equation}
	\eta_{s,q}^{(\psi)}(\mathbf h_{s,q}) \triangleq \frac{ \|\mathbf P_{s,q}^{(\psi)}\mathbf h_{s,q}\|_2^2}{\|\mathbf h_{s,q}\|_2^2}.
\end{equation}
This ratio is the normalized projection power of $\mathbf h_{s,q}$ onto the representing subspace selected by rule $\psi$. A value close to one means that the selected subspace preserves most of the channel direction, whereas a small value indicates severe channel-power loss before data precoding.
When maximum-ratio transmission (MRT) is applied within the selected subspace, the resulting beamforming vector is
\begin{equation}
	\hat{\mathbf w}_{s,q}^{(\psi)} = \frac{\mathbf P_{s,q}^{(\psi)}\mathbf h_{s,q}}{\|\mathbf P_{s,q}^{(\psi)}\mathbf h_{s,q}\|_2}.
\end{equation}
The corresponding signal-to-noise ratio (SNR) is $\rho_{s,q}\eta_{s,q}^{(\psi)}(\mathbf h_{s,q})$, where $\rho_{s,q}	\triangleq \frac{P_s\|\mathbf h_{s,q}\|_2^2}{\sigma_{n,s}^2}$ denotes the full-CSI SNR. Thus, for a fixed full-CSI SNR, the CSI-capture efficiency directly characterizes the SNR fraction preserved by feedback rule $\psi$.
\vspace{-0.3cm}
\subsection{Problem Formulation}
We design the limited feedback system to maximize the achievable data rate while keeping the associated overhead affordable. Accordingly, for UE $q$ in site $s$, we formulate the feedback design problem for rule $\psi$ as
\begin{equation}
	\label{prob:generic1}
	\max_{\psi\in\Psi} \  R_{s,q} (\psi) = \left(1-\frac{T_o(\psi)}{T}\right)	\log_2\!\Big(1+\rho_{s,q}\eta_{s,q}^{(\psi)}(\mathbf h_{s,q})\Big),
\end{equation}
where $\Psi$ is the feasible set of feedback rules, $T_o(\psi)$ denotes the channel-use overhead induced by rule $\psi$, and $T$ is the coherence interval length in channel uses. $T_o(\psi)$ usually consists of SSB/CSI-RS transmission/feedback overheads.
\\\indent
The optimization in \eqref{prob:generic1} couples CSI acquisition quality and signaling overhead through the same feedback rule $\psi$. In particular, richer measurements may improve the CSI-capture factor $\eta_{s,q}^{(\psi)}(\mathbf h_{s,q})$, but they also increase $T_o(\psi)$ and reduce the effective data-transmission time. Since the useful CSI subspace is channel dependent whereas full CSI is unavailable at the BS before feedback, improving \eqref{prob:generic1} requires a prior that links limited measurements to effective subspace decisions. Site-specific learning provides such a prior from target-site samples and is effective when one deployment confines channels to a limited propagation geometry. Scaling this idea to many sites, however, requires repeated labeled-data collection and training over diverse channel distributions, which becomes prohibitive at network scale. Wireless foundation models can address this challenge by learning transferable wireless representations from diverse deployments and adapting them to new sites with limited calibration data.

\section{SiFo for Site-Specific CSI Feedback} \label{sec:from_feedback_to_fm_sst2}
In this section, we present the proposed SiFo framework for solving \eqref{prob:generic1}. The framework combines the cross-site transfer capability of wireless foundation models with the local propagation adaptation of site-specific learning, thereby providing a data-efficient and low-overhead solution to \eqref{prob:generic1}. We first give the preliminary analysis that motivates SiFo, and then describe the proposed framework.
\vspace{-0.3cm}
\subsection{Preliminary Analysis}
\subsubsection{A Joint Analysis of Site-Agnostic and Site-Specific Feedback}
The feedback rule in \eqref{prob:generic1} can use different levels of site information. Conventional NR feedback represents the site-agnostic case, where one standardized codebook and feedback structure are applied to all sites. This rule is fixed before target-site propagation statistics are observed and corresponds to
\begin{equation}
	\psi_{\rm agn}^{\star} = \argmax_{\psi\in\Psi} \mathbb E_{s}\mathbb E_{q|s} \left[ R_{s,q}(\psi) \right],
	\label{eq:site_agnostic_solution}
\end{equation}
which maximizes the expected effective spectral efficiency over the population of sites and UEs. Type-I, Type-II, and PSC are representative standardized instantiations of this principle. Detailed analysis of these three schemes is provided in \cite{Zhao2026UnifiedLF}. Among them, Type-II feedback offers the richest subspace representation and highest spectral efficiency, but requires substantial online CSI acquisition, feedback overhead, and UE-side processing.
\\ \indent
Site-specific feedback allows the acquisition rule to depend on the deployment site. For a fixed site $s$, the ideal site-conditioned rule solves
\begin{equation}
	\psi_s^{\star} = \argmax_{\psi\in\Psi} \mathbb E_{q|s} \left[ R_{s,q}(\psi) \right].
	\label{eq:site_specific_solution}
\end{equation}
The dependence of $\psi_s^{\star}$ on $s$ represents the use of SSI, where the BS exploits target-site propagation statistics rather than applying one common rule to all sites.
\\ \indent
It is worth noting that site-agnostic design can be regarded as a special case of site-specific design by setting $\psi_s\equiv\psi$ for all $s$, leading to the following inequality
\begin{equation}
	\mathbb E_s \left[ \max_{\psi\in\Psi} \mathbb E_{q|s} \left[ R_{s,q}(\psi) \right] \right] \geq \max_{\psi\in\Psi} \mathbb E_s \mathbb E_{q|s} \left[ R_{s,q}(\psi) \right].
	\label{eq:site_specific_advantage}
\end{equation}
The inequality quantifies the target pursued by site-specific feedback: local propagation statistics can improve the rate objective by both increasing CSI-capture efficiency and reducing CSI acquisition overhead. Realizing this target requires an accurate site-conditioned rule for each deployment. The key difficulty is how to obtain such rules without collecting a large labeled dataset and retraining model for every new site.

\subsubsection{Single-Site SST2 as Site-Specific Feedback}
SST2 can be viewed as a learnable realization of the site-specific design in \eqref{eq:site_specific_solution}. Instead of requiring the UE to construct a high-resolution Type-II CSI feedback representation, SST2 shifts the subspace inference burden to the BS. Let $\boldsymbol{\Theta}=(\boldsymbol{\Theta}_{\rm B},\boldsymbol{\Theta}_{\rm H})$ collect the trainable SST2 parameters, where $\boldsymbol{\Theta}_{\rm B}$ parameterizes the probing codebook and $\boldsymbol{\Theta}_{\rm H}$ parameterizes the subspace inference module. An SST2 model can then be written as
\begin{equation}
	\psi_{\boldsymbol{\Theta}}^{\rm SST2} = \left( \mathbf B_{\boldsymbol{\Theta}_{\rm B}},\, H_{\boldsymbol{\Theta}_{\rm H}} \right),
	\label{eq:sst2_model}
\end{equation}
where $\mathbf B_{\boldsymbol{\Theta}_{\rm B}}$ generates a low-dimensional RSRP fingerprint and $H_{\boldsymbol{\Theta}_{\rm H}}$ maps this fingerprint to a representing subspace. When the distinction between the two parameter blocks is not essential, we use the shorthand $\mathbf B_{\boldsymbol{\Theta}}$ and $H_{\boldsymbol{\Theta}}$. For UE $q$ in site $s$, following model~\eqref{RSRP}, the measured fingerprint is given by
\begin{equation}
	\mathbf r_{s,q} (\mathbf B_{\boldsymbol{\Theta}}) = \mathcal M(\mathbf h_{s,q},\mathbf B_{\boldsymbol{\Theta}}) + \mathbf{n}_{s,q},
	\label{eq:sst2_fingerprint}
\end{equation}
and the BS predicts the representing subspace as
\begin{equation}
	\hat{\mathcal U}_{s,q} = H_{\boldsymbol{\Theta}}(\mathbf r_{s,q} (\mathbf B_{\boldsymbol{\Theta}})).
	\label{eq:sst2_subspace_prediction}
\end{equation}
Given a prescribed probing budget, the overhead term in \eqref{prob:generic1} is constant across SST2 parameter choices, and the achievable-rate objective becomes monotone in the CSI-capture efficiency. Therefore, the single-site SST2 baseline can be trained through the following capture-efficiency objective:
\begin{equation}
	\boldsymbol{\Theta}_s^\star = \argmax_{\boldsymbol{\Theta}} \mathbb E_{\mathbf h\sim\mathcal P_s} \left[ \eta \left( \mathbf h,\psi_{\boldsymbol{\Theta}}^{\rm SST2} \right) \right],
	\label{eq:single_site_sst2_objective}
\end{equation}
where $\mathcal P_s$ denotes the channel distribution of site $s$.
\\ \indent
This formulation explains why SST2 can be effective. The model only needs to learn the conditional relation induced by one fixed propagation distribution $\mathcal P_s$ rather than to solve a universal RSRP-to-subspace mapping problem. With sufficient data from the same site, SSI supervision resolves the ambiguity between coarse amplitude-only RSRP fingerprints and high-dimensional subspace decisions.
\\ \indent
This formulation also exposes the limitation of SST2. The learned parameters in \eqref{eq:single_site_sst2_objective} are tied to one deployment distribution. For a new target site $t$ with distribution $\mathcal P_t$, the optimal SST2 parameters may be different:
\begin{equation}
	\boldsymbol{\Theta}_t^\star = \argmax_{\boldsymbol{\Theta}} \mathbb E_{\mathbf h\sim\mathcal P_t} \left[ \eta \left( \mathbf h,\psi_{\boldsymbol{\Theta}}^{\rm SST2} \right) \right].
	\label{eq:target_site_sst2_objective}
\end{equation}
Training such a site-specific model from scratch for every target site requires abundant target-site data and repeated training, which limits the scalability of SST2.

\subsubsection{Cross-Site Pretraining Limitation}
A natural way to improve scalability is to pretrain one SST2 model over multiple source sites. The corresponding training objective maximizes the average CSI-capture efficiency over the source-site channel distributions:
\begin{equation}
	\boldsymbol{\Theta}_0^\star = \argmax_{\boldsymbol{\Theta}} \mathbb E_{s\in\mathcal S_{\rm src}} \mathbb E_{\mathbf h\sim\mathcal P_s} \left[ \eta \left( \mathbf h,\psi_{\boldsymbol{\Theta}}^{\rm SST2} \right) \right].
	\label{eq:cross_site_pretraining_objective}
\end{equation}
This objective produces a single shared SST2 model trained over the source-site mixture. If deployed directly to a new target site, however, the model applies the same rule $\psi_{\boldsymbol{\Theta}_0^\star}^{\rm SST2}$ to all target deployments. Its decision is conditioned on the observed RSRP fingerprint, but not on the local propagation statistics of the target site.
\\ \indent
The limitation comes from the ambiguity in inferring CSI subspaces from RSRP fingerprints. For this ambiguity argument, let $\mathbf P_Q^\star$ denote the full-CSI rank-$Q$ reference projector associated with the prescribed CSI representing subspace, where $Q$ is the representation dimension fixed by the system requirement. For a fixed fingerprint $\mathbf r$, different sites may induce different conditional rank-$Q$ subspace statistics, i.e.,
\begin{equation}
	\mathbb E \left[ \mathbf P_Q^\star \,\middle|\,	\mathbf r, s \right] \neq \mathbb E \left[ \mathbf P_Q^\star \,\middle|\, \mathbf r, t \right],
	\label{eq:rsrp_subspace_ambiguity}
\end{equation}
because similar amplitude-only RSRP fingerprints can arise from different multipath geometries, angular spreads, phase relations, and scattering structures. Thus, cross-site pretraining can provide a useful marginal prior, but it cannot by itself determine the target-site conditional relation in \eqref{eq:rsrp_subspace_ambiguity}. Limited fine-tuning may partially specialize the parameters, but it still compresses the target-site samples into model weights and does not retain sample-level local evidence during inference.
\vspace{-0.3cm}
\subsection{Proposed SiFo Framework}\label{subsec:fm_sst2_framework_overview}
SiFo restores target-site conditioning at inference time. Cross-site pretraining learns a shared SSB probing codebook and a marginal fingerprint-to-subspace prior over source deployments. Target-site calibration then stores local full-CSI subspace evidence as projector-valued memory, so that online UEs can be interpreted relative to calibrated UEs from the same deployment. SiFo operates through three deployment stages as shown in Fig. \ref{fig:arch}:
\begin{itemize}
	\item \textbf{Stage 1: Cross-site pretraining.} The BS trains an SST2 model on multiple source sites $\mathcal S_{\rm src}$, yielding pretrained parameters $\boldsymbol{\Theta}_0^\star$. The resulting model contains a learned probing codebook $\mathbf B_{\boldsymbol{\Theta}_0^\star}$ and a subspace inference network $H_{\boldsymbol{\Theta}_0^\star}(\cdot)$ that maps an RSRP fingerprint to a $Q$-dimensional CSI representation basis. This stage provides shared RSRP fingerprint coordinates and a transferable parametric prior for CSI subspace acquisition.
	\item \textbf{Stage 2: Target-site projector calibration.} For a new target site $t$, the BS collects a calibration UE set $\mathcal C_t$. Each calibration UE is measured through the pretrained SSB probing codebook and is associated offline with a full-CSI direction projector. These calibrated samples form the target-site projector memory used by the online acquisition rule.
	\item \textbf{Stage 3: Calibration-aided online subspace acquisition.} For served UE $q$, the UE reports the same $K$ RSRP measurements produced by the pretrained SSB probing codebook. The BS retrieves calibrated UEs with similar fingerprints from the target-site memory, forms a memory-based projector estimate from their full-CSI direction projectors, and fuses it with the parametric subspace prediction through a UE-dependent confidence weight. The fused estimate is then converted into the final CSI representation subspace.
\end{itemize}

\begin{figure}[t]
	\centering
	\includegraphics[scale=0.37]{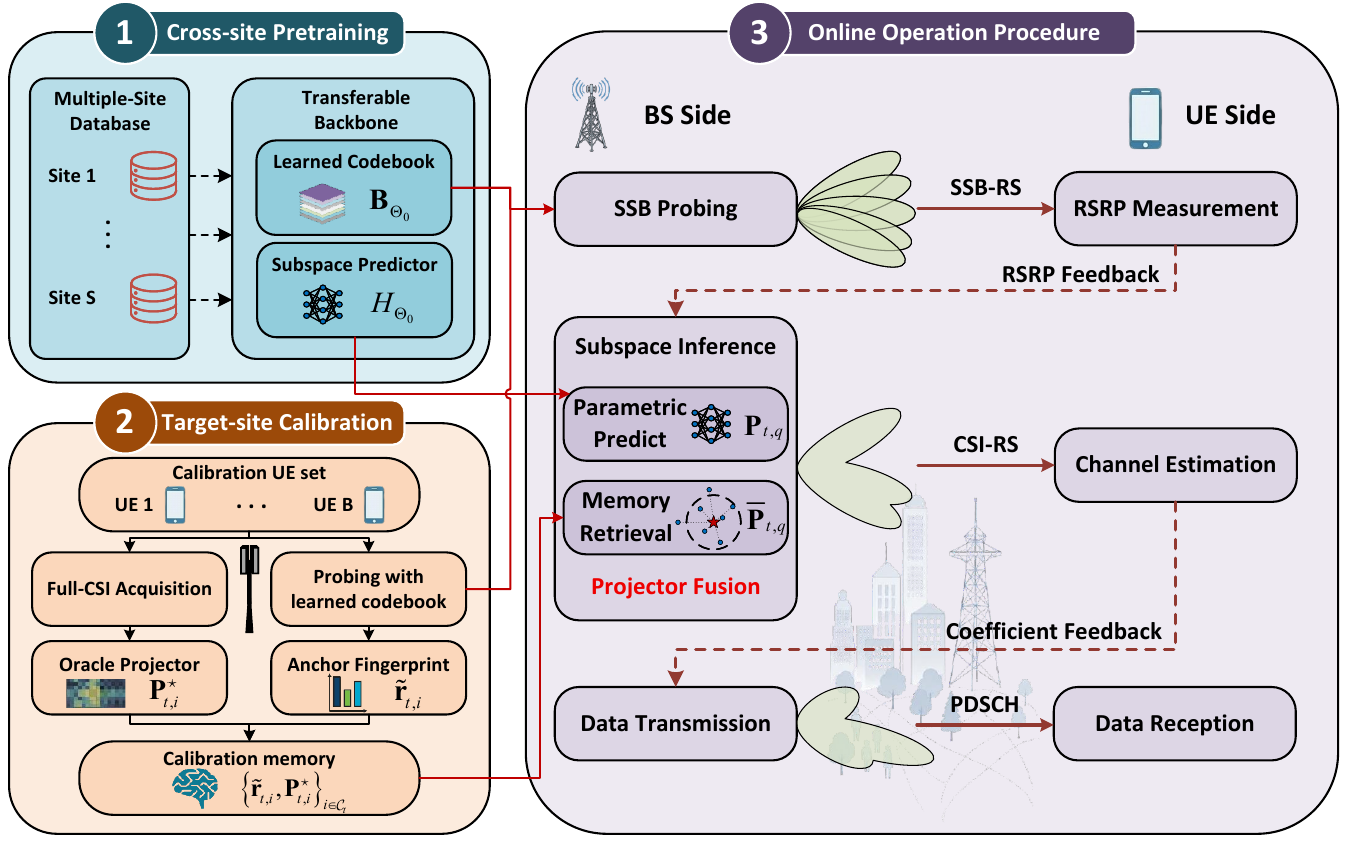}
	\caption{Illustration of the SiFo framework}
	\label{fig:arch}
\end{figure}

When deployed to a new site, SiFo collects a small size of target-site labels to construct projector memory while the parametric branch remains fixed, i.e., $\boldsymbol{\Theta}_t=\boldsymbol{\Theta}_0^\star$. The calibration principle is developed in Section~\ref{sec:sensing_criteria}. The projector-memory construction and online fusion rule are detailed in Section~\ref{sec:site_sensing_encoding}. The backbone architecture and deployment implementation are specified in Section~\ref{sec:backbone_deployment}.

\section{Low-Overhead Target-Site Calibration}\label{sec:sensing_criteria}
In this section, we develop the low-overhead target-site calibration principle used in Stage 2 of SiFo. The goal is to identify what compact RSRP measurements can provide for Type-II subspace acquisition and why full-CSI-based projector labels are needed during calibration.
\vspace{-0.3cm}
\subsection{Calibration RSRP and Directional-Power Statistics}\label{subsec:what_site_sensing_estimates}
Target-site calibration uses RSRP measurements to sense spatial power over a deployment. The analysis involves two spatial covariance levels: a UE-level covariance for the local CSI representation subspace and a site-level covariance for aggregate directional-power statistics. Throughout this section, channel covariances refer to uncentered spatial second-order moments. Let $q\sim\mathcal Q_t$ denote a UE in target site $t$, and let $\mathbf h\sim\mathcal P_t(\cdot|q)$ denote its local downlink channel distribution. The UE-level channel covariance is defined as
\begin{equation}
	\mathbf R_{t,q}^{\rm UE} \triangleq \mathbb E_{\mathbf h\sim\mathcal P_t(\cdot|q)} \left[ \mathbf h\mathbf h^H \right].
	\label{eq:ue_level_covariance}
\end{equation}
The site-level channel covariance is the mixture of UE-level covariances over the deployment, which is given by
\begin{equation}
	\mathbf R_t = \mathbb E_{q\sim\mathcal Q_t} \left[ \mathbf R_{t,q}^{\rm UE} \right] = \mathbb E_{\mathbf h\sim\mathcal P_t} [\mathbf h\mathbf h^H].
	\label{eq:site_ue_covariance_mixture}
\end{equation}
Here, $\mathcal P_t$ denotes the marginal target-site channel distribution. Thus, $\mathbf R_t$ describes the aggregate, power-weighted spatial structure sensed over the site, while $\mathbf R_{t,q}^{\rm UE}$ describes the local spatial statistics relevant to UE $q$.
\\ \indent
Under the Type-II feedback budget, the BS ultimately selects a prescribed $Q$-dimensional CSI representation subspace for each UE. If the UE-level covariance were available, the full-CSI reference subspace for UE $q$ would be the $Q$-dimensional eigenspace that captures the largest average channel power. Large-scale gain affects the received SNR but not this spatial eigenspace. Therefore, the following normalized UE-level covariance is adopted:
\begin{equation}
	\widetilde{\mathbf R}_{t,q}^{\rm UE} \triangleq \frac{\mathbf R_{t,q}^{\rm UE}} {\operatorname{tr}(\mathbf R_{t,q}^{\rm UE})},
	\label{eq:normalized_ue_second_moment}
\end{equation}
which preserves the eigenspaces of $\mathbf R_{t,q}^{\rm UE}$ while removing power scaling. The corresponding full-CSI reference subspace is the dominant $Q$-dimensional eigenspace of $\widetilde{\mathbf R}_{t,q}^{\rm UE}$. The site-level covariance $\mathbf R_t$ governs directional-power statistics over the deployment, whereas $\widetilde{\mathbf R}_{t,q}^{\rm UE}$ determines the local CSI representation subspace for UE $q$.
\\ \indent
Let $\mathbf S=[\mathbf s_1,\ldots,\mathbf s_J]\in\mathbb C^{N_t\times J}$ denote a generic RSRP sensing codebook. For calibration UE $i\in\mathcal C_t$, the noise-free linear-domain RSRP fingerprint induced by $\mathbf S$ is
\begin{equation}
	\bar{\mathbf r}_{t,i}(\mathbf S) = \left[ |\mathbf s_1^H\mathbf h_{t,i}|^2,\ldots, |\mathbf s_J^H\mathbf h_{t,i}|^2 \right]^T = |\mathbf S^H\mathbf h_{t,i}|^2,
	\label{eq:site_power_profile_general}
\end{equation}
where the absolute value and square operations are performed element-wise. The corresponding population directional-power statistic is
\begin{equation}
	\boldsymbol{\pi}_t(\mathbf S) = \mathbb E_{\mathbf h\sim\mathcal P_t} [ |\mathbf S^H\mathbf h|^2] = \operatorname{diag} \left( \mathbf S^H\mathbf R_t\mathbf S \right),
	\label{eq:site_stat_general}
\end{equation}
whose $\ell$-th entry can be expressed as $[\boldsymbol{\pi}_t(\mathbf S)]_\ell = \mathbf s_\ell^H\mathbf R_t\mathbf s_\ell$. The $\ell$-th entry therefore represents the site-average received power along sensing direction $\mathbf s_\ell$. Thus, calibration RSRP observes diagonal directional-power projections of $\mathbf R_t$ in the sensing coordinates. This is the spatial information supplied by RSRP: sample-level beam-power fingerprints and their site-level directional-power statistics.
\\ \indent
In SiFo, the generic sensing codebook $\mathbf S$ is instantiated by the pretrained learned SSB codebook $\mathbf B_{\boldsymbol{\Theta}_0^\star}$ for both calibration UEs and online served UEs. Using the same SSB probing codebook places calibration fingerprints and online query fingerprints in the same beam-power coordinate system, so nearest-neighbor retrieval compares responses generated by the same probing beams.
\vspace{-0.3cm}
\subsection{Prior-Free Sensing Principle}\label{subsec:prior_free_isotropic_sensing}
RSRP sensing observes directional projections of the unknown site-level channel covariance $\mathbf R_t$. If $\mathbf R_t$ were available, the sensing beams could be designed to emphasize the dominant site-specific angular subspace. Before target-site calibration, however, neither $\mathcal P_t$ nor $\mathbf R_t$ is known to the BS. This motivates a prior-free reference criterion for sensing design: before observing SSI, the sensing codebook should not privilege any site-specific angular sector.

\begin{assumption}[\emph{Isotropic sensing prior for unknown sites}] \normalfont
	\label{assump:prior_free_site_sensing}
	Before SSI is collected, no unit-norm direction in $\mathbb C^{N_t}$ is regarded as more likely or more important than any other. The calibration-stage sensing codebook should therefore provide directionally uniform observability and avoid blind spatial directions, rather than optimize for any presumed site-dependent angular sector.
\end{assumption}

The isotropic prior is epistemic rather than physical: real deployment sites are generally anisotropic once geometry and scatterers are fixed. It only states that, before SSI is observed, the sensing design itself should not encode a site-specific angular preference.
For a unit-norm channel direction $\mathbf h$, the sensing energy collected by $\mathbf S$ is $\|\mathbf S^H\mathbf h\|_2^2=\mathbf h^H\mathbf S\mathbf S^H\mathbf h$. Therefore, the prior-free reference design that maximizes the worst-case sensing energy can be formulated into the following optimization problem:
\begin{subequations}
	\label{prob:isotropic_site_sensing}
	\begin{align}
		\max_{\mathbf S}\quad &\min_{\|\mathbf h\|_2=1} \|\mathbf S^H\mathbf h\|_2^2 \\
		\mathrm{s.t.}\quad &\|\mathbf s_j\|_2=1,\qquad j=1,\ldots,J,\\
		&\mathbf s_j\in\mathcal A_{\rm RF},\qquad j=1,\ldots,J,
	\end{align}
\end{subequations}
where the unit-norm constraint fixes the per-beam sensing power, and $\mathcal A_{\rm RF}$ captures analog-beam feasibility constraints such as phase-only or quantized-phase implementations.
\\ \indent 
The sensing-budget consequence follows from the rank of the sensing covariance. With fewer than $N_t$ sensing beams, the rank satisfies
\begin{equation}
	\operatorname{rank}(\mathbf S\mathbf S^H)\leq J<N_t.
\end{equation}
As a consequence, the sensing beams span at most a $J$-dimensional subspace of $\mathbb C^{N_t}$. Hence, there exists a unit-norm direction $\mathbf h_0\in\operatorname{span}\{\mathbf s_1,\ldots,\mathbf s_J\}^{\perp}$ for which $\mathbf S^H\mathbf h_0=\mathbf 0$ and hence
\begin{equation}
	\min_{\|\mathbf h\|_2=1}\|\mathbf S^H\mathbf h\|_2^2=0 .
\end{equation}
Thus, no prior-free sensing design with fewer than $N_t$ beams can avoid blind spatial directions in the worst case. The case $J=N_t$ is therefore the minimum non-degenerate full-coverage reference, and it is characterized below.

\begin{theorem}[\emph{Minimum-budget full-coverage sensing reference}] \normalfont
	\label{thm:min_budget_isotropic_sensing}
	Consider \eqref{prob:isotropic_site_sensing} with $J=N_t$ unit-norm sensing beams. The optimal worst-case sensing energy is one. Moreover, a feasible codebook attains this value if and only if $\mathbf S\mathbf S^H=\mathbf I_{N_t}$. Consequently, any feasible unitary sensing codebook is optimal. For a ULA with $N_t$ antennas, the DFT-$N_t$ codebook $\mathbf S^\star=\mathbf D_{N_t}$ is one such optimal solution, where
	\begin{equation}
		[\mathbf D_{N_t}]_{n,k} = \frac{1}{\sqrt{N_t}} e^{-j\frac{2\pi}{N_t}nk}, \quad n,k=0,\ldots,N_t-1.
		\label{eq:dft_codebook_def}
	\end{equation}
\end{theorem}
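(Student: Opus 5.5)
The plan is to rewrite the inner minimization as an eigenvalue problem and then trade it off against a trace budget. For any $\mathbf S$, the inner minimum of \eqref{prob:isotropic_site_sensing} is a Rayleigh quotient:
\begin{equation*}
\min_{\|\mathbf h\|_2=1}\|\mathbf S^H\mathbf h\|_2^2=\min_{\|\mathbf h\|_2=1}\mathbf h^H\mathbf S\mathbf S^H\mathbf h=\lambda_{\min}(\mathbf S\mathbf S^H).
\end{equation*}
The matrix $\mathbf G\triangleq\mathbf S\mathbf S^H$ is Hermitian positive semidefinite, and the unit-norm constraints fix its trace:
\begin{equation*}
\operatorname{tr}(\mathbf G)=\operatorname{tr}(\mathbf S^H\mathbf S)=\sum_{j=1}^{N_t}\|\mathbf s_j\|_2^2=N_t .
\end{equation*}

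The upper bound comes from comparing the smallest eigenvalue with the average. Since $\lambda_{\min}(\mathbf G)\le \frac{1}{N_t}\sum_n\lambda_n(\mathbf G)=\operatorname{tr}(\mathbf G)/N_t=1$, every feasible codebook has worst-case sensing energy at most one. For the characterization, equality $\lambda_{\min}(\mathbf G)=1$ forces every eigenvalue to equal one. Otherwise some eigenvalue would exceed one while all are at least one, and the trace would exceed $N_t$. A Hermitian matrix with all eigenvalues equal to one is $\mathbf I_{N_t}$, which gives the "only if" direction. Conversely, $\mathbf S\mathbf S^H=\mathbf I_{N_t}$ gives $\lambda_{\min}=1$, which gives the "if" direction.

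Next I would translate the condition into unitarity. Because $\mathbf S$ is square ($J=N_t$), $\mathbf S\mathbf S^H=\mathbf I$ is equivalent to $\mathbf S^H\mathbf S=\mathbf I$, i.e., $\mathbf S$ is unitary. Unitarity automatically implies unit-norm columns. Therefore any unitary codebook that also satisfies $\mathcal A_{\rm RF}$ is feasible and attains the bound, hence is optimal.

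For the DFT claim, I would verify two things. First, feasibility: every entry of $\mathbf D_{N_t}$ has modulus $1/\sqrt{N_t}$, so each column is a phase-only (constant-modulus) vector with unit norm. I will assume, as the statement implicitly does, that $\mathcal A_{\rm RF}$ admits such vectors, including the quantized phases $2\pi nk/N_t$ for sufficient phase resolution. Second, orthogonality: the geometric-sum identity $\sum_{n=0}^{N_t-1}e^{j2\pi n(k-k')/N_t}=N_t\,\delta_{kk'}$ gives $\mathbf D_{N_t}^H\mathbf D_{N_t}=\mathbf I$. Attainment also shows that the optimal value is exactly one, not merely bounded by one.

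The only delicate point is this last feasibility step. The bound and the characterization hold regardless of $\mathcal A_{\rm RF}$. However, the claim that the optimum \emph{equals} one requires at least one feasible unitary codebook, which the DFT codebook supplies under the phase-only interpretation of $\mathcal A_{\rm RF}$. I would state this reliance explicitly.
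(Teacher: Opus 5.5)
Your proposal is correct and follows the paper's proof essentially step for step: Rayleigh--Ritz reduces the inner minimum to $\lambda_{\min}(\mathbf S\mathbf S^H)$, and $\operatorname{tr}(\mathbf S\mathbf S^H)=N_t$ bounds it by one, with equality iff all eigenvalues equal one, i.e., $\mathbf S\mathbf S^H=\mathbf I_{N_t}$. Unitarity and constant modulus of $\mathbf D_{N_t}$ then give attainment. Your explicit note that the optimal value equals one only if $\mathcal A_{\rm RF}$ admits some unitary codebook sharpens a point the paper leaves implicit when it calls $\mathbf D_{N_t}$ compatible with the phase-only constraint.
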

\begin{IEEEproof}
	See Appendix~\ref{app1}.
\end{IEEEproof}

\begin{remark}[\emph{Implication for learned low-overhead probing}] \normalfont
	\label{rem:learned_codebook_sensing_principle}
	\textbf{Theorem~\ref{thm:min_budget_isotropic_sensing}} gives a prior-free full-coverage reference: at least $N_t$ beams are needed to avoid blind spatial directions, and the non-redundant solution is orthogonal sensing. The DFT-$N_t$ codebook is one convenient ULA realization, but deploying $N_t$ SSB beams would impose excessive synchronization-stage overhead. SiFo therefore uses a low-dimensional learned codebook with $K\ll N_t$, accepting that worst-case full-space coverage is no longer possible. In this finite-budget regime, the probing beams should remain orthogonal within the selected $K$-beam sensing subspace and produce distinguishable RSRP angular responses for downstream subspace acquisition. The former is measured by the off-diagonal energy of the Gram matrix of $\mathbf B_{\boldsymbol{\Theta}_0^\star}$, i.e., $\mathbf B_{\boldsymbol{\Theta}_0^\star}^H\mathbf B_{\boldsymbol{\Theta}_0^\star}$. Section~\ref{subsec:key_ablation} shows that the learned codebook is naturally near-orthogonal and approaches the performance of a wider DFT reference with significantly reduced overhead.
\end{remark}

\begin{remark}[\emph{Angular interpretation of directional-power sensing}]\normalfont
	\label{rem:angular_interpretation_power_sensing}
	For a ULA, directional-power sensing admits an angular-kernel interpretation. The DFT reference corresponds to fixed Dirichlet kernels on a uniform angular grid, yielding an angular-bin power profile with leakage and multi-cluster peaks at the UE-sample level. At the site level, the corresponding population statistic is $\operatorname{diag}(\mathbf D_{N_t}^H\mathbf R_t\mathbf D_{N_t})$, which represents a DFT-smoothed angular power spectrum of the target-site covariance. A learned probing codebook replaces the fixed DFT grid with data-adapted directional kernels, so $\operatorname{diag}(\mathbf B_{\boldsymbol{\Theta}_0^\star}^H\mathbf R_t\mathbf B_{\boldsymbol{\Theta}_0^\star})$ gives the same site-level directional-power statistic measured in the learned sensing coordinates. The SiFo fingerprint is therefore a low-dimensional sample-level realization of these learned directional-power projections.
\end{remark}
\vspace{-0.5cm}
\subsection{Full-CSI Direction Labels for Calibration}\label{subsec:why_projector_labels}
The prior-free reference clarifies the limitation of low-dimensional probing. In SiFo, the deployed codebook $\mathbf{B}_{\boldsymbol{\Theta}_0^\star}$ uses $K \ll N_t$ learned beams, so the resulting RSRP vector is a compact directional-power fingerprint rather than an invertible channel observation. Some spatial directions are inevitably blind or weakly observed under a worst-case full-space criterion.
\\ \indent
This compact fingerprint remains useful for indexing calibrated UEs. Target-site propagation is spatially structured, so UEs with similar learned-beam power responses tend to share related dominant angular support, angular spread, or local scattering conditions. With the shared learned codebook, calibration and served UEs are measured in the same beam-power coordinates, and nearby calibration samples provide local references for the served UE.
\\ \indent
The RSRP coordinate alone does not specify the associated CSI subspace. Even under the $N_t$-beam DFT reference, RSRP contains only beam-domain magnitudes for an instantaneous channel or diagonal directional-power statistics at the population level. The off-diagonal covariance information needed to determine CSI subspace geometry remains unobserved. SiFo therefore stores calibration samples in a paired form: the RSRP fingerprint provides the directional-power coordinate, while an offline full-CSI direction projector provides the corresponding subspace label. This paired calibration information enables target-site subspace acquisition with low-dimensional learned $K$-beam online probing.

\section{Calibration-Aided CSI Subspace Acquisition}
 \label{sec:site_sensing_encoding}
In this section, we develop the calibration-aided CSI subspace acquisition rule used in Stage 3 of SiFo. The target-site calibration output is a set of projector-labeled RSRP fingerprints, which supports online acquisition through three operations: projector-labeled sample construction, local covariance acquisition from neighboring calibrated UEs, and confidence-weighted fusion with the parametric predictor.
\vspace{-0.3cm}
\subsection{Projector-Labeled Calibration Memory}\label{subsec:calibration_memory_construction}
The calibration memory preserves the sample-level relation between a low-dimensional RSRP fingerprint and the corresponding full-CSI reference subspace. With the pretrained $K$-beam SSB codebook $\mathbf B_{\boldsymbol{\Theta}_0^\star}$, calibration UE $i$ obtains $\mathbf r_{t,i}$ following~\eqref{RSRP}. For calibration-neighbor search, the relevant information is the shape of the beam-power response across the learned SSB beams. This shape reflects the UE's angular support, angular spread, and local scattering structure, whereas the total RSRP level is dominated by large-scale gain and does not change the desired subspace. Since the projector label below is also gain-invariant, SiFo uses the normalized RSRP profile as the calibration key, which is given by
\begin{equation}
	\tilde{\mathbf r}_{t,i} = \frac{\mathbf r_{t,i}} {\|\mathbf r_{t,i}\|_2} \in \mathbb{R}^{K \times 1}.
	\label{eq:calibration_key}
\end{equation}
Because calibration and online UEs are measured by the same probing codebook, their normalized fingerprints are comparable in a common learned beam-power coordinate system.
\\ \indent
The RSRP key identifies local similarity under the shared probing codebook, but it is not itself a CSI subspace decision. The calibration memory is hence defined as a set of key-value pairs, where the key is the normalized RSRP profile and the value is the gain-normalized full-CSI direction projector:
\begin{equation}
	\mathcal{Y}_t = \left\{ (\tilde{\mathbf r}_{t,i},\mathbf P_{t,i}^{\star}) \right\}_{i\in\mathcal C_t}, \quad \mathbf P_{t,i}^{\star} = \frac{\mathbf h_{t,i}\mathbf h_{t,i}^H}{\|\mathbf h_{t,i}\|_2^2}.
	\label{eq:full_csi_direction_projector}
\end{equation}
The projector $\mathbf P_{t,i}^{\star}$ records the normalized spatial direction observed from the full-CSI vector of calibration UE $i$. Since one calibration channel vector provides only one spatial direction, this label is a direction-level covariance sample rather than the final rank-$Q$ CSI representation subspace. The prescribed $Q$-dimensional subspace is formed later by averaging the projectors of neighboring calibrated UEs in \eqref{ensemble} and extracting the dominant rank-$Q$ eigenspace in \eqref{eq:final_projector}. For any candidate precoding vector $\mathbf v$, the quadratic form $\mathbf v^H\mathbf P_{t,i}^{\star}\mathbf v$ measures how much of $\mathbf v$ lies along the true channel direction after channel gain and common phase are removed. Hence, $\mathbf P_{t,i}^{\star}$ supplies the phase-aware spatial-direction information that the amplitude-only RSRP key $\tilde{\mathbf r}_{t,i}$ cannot provide.
\begin{lemma}[\emph{Full-CSI direction sample and local covariance}]
	\label{lem:full_csi_direction_covariance}\normalfont
	Under the sparse geometric channel model and \textbf{Assumption~\ref{assump:near_orthogonal_paths}}, further assume that the instantaneous channel power concentrates around its UE-level average, i.e.,
	\begin{equation}
		\|\mathbf h_{t,i}\|_2^2 \approx	\mathbb E_{\mathbf h_{t,i}\sim\mathcal P_{t,i}^{\rm UE}} \!\left[\|\mathbf h_{t,i}\|_2^2\right] = \operatorname{tr}(\mathbf R_{t,i}^{\rm UE}),
	\end{equation}
	where $\mathcal P_{t,i}^{\rm UE}$ denotes the local channel distribution of calibration UE $i$. Then the full-CSI direction projector satisfies
	\begin{equation}
		\mathbb E_{\mathbf h_{t,i}\sim\mathcal P_{t,i}^{\rm UE}} \!\left[\mathbf P_{t,i}^{\star}\right] \approx \widetilde{\mathbf R}_{t,i}^{\rm UE},
		\label{eq:prop1}
	\end{equation}
	where $\widetilde{\mathbf R}_{t,i}^{\rm UE}$ is defined by \eqref{eq:normalized_ue_second_moment}.
\end{lemma}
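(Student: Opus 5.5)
The approach is to substitute the concentration hypothesis into the denominator of $\mathbf P_{t,i}^{\star}$ in \eqref{eq:full_csi_direction_projector}, pull the resulting deterministic normalizer out of the expectation, and recognize the remaining numerator as the UE-level covariance \eqref{eq:ue_level_covariance}. Concretely, we first write
\begin{equation}
	\mathbf P_{t,i}^{\star}=\frac{\mathbf h_{t,i}\mathbf h_{t,i}^H}{\|\mathbf h_{t,i}\|_2^2}
	\approx \frac{\mathbf h_{t,i}\mathbf h_{t,i}^H}{\operatorname{tr}(\mathbf R_{t,i}^{\rm UE})}.
\end{equation}
Because $\operatorname{tr}(\mathbf R_{t,i}^{\rm UE})$ does not depend on the realization $\mathbf h_{t,i}\sim\mathcal P_{t,i}^{\rm UE}$, linearity of expectation then gives $\mathbb E[\mathbf P_{t,i}^{\star}]\approx \mathbf R_{t,i}^{\rm UE}/\operatorname{tr}(\mathbf R_{t,i}^{\rm UE})=\widetilde{\mathbf R}_{t,i}^{\rm UE}$ by \eqref{eq:normalized_ue_second_moment}. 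The identity $\mathbb E\|\mathbf h_{t,i}\|_2^2=\operatorname{tr}(\mathbf R_{t,i}^{\rm UE})$ used in the hypothesis itself follows from $\|\mathbf h\|_2^2=\operatorname{tr}(\mathbf h\mathbf h^H)$ and the fact that trace commutes with expectation.

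The sparse geometric model and \textbf{Assumption~\ref{assump:near_orthogonal_paths}} serve to justify and interpret the concentration step. Using \eqref{eq:channel_model_new}, we get $\|\mathbf h_{t,i}\|_2^2=\boldsymbol\alpha^H\mathbf A^H\mathbf A\boldsymbol\alpha\approx\sum_\ell|\alpha_\ell|^2$, so the instantaneous power is approximately the sum of path powers, with no cross-path interference terms. Hence the only fluctuation left is that of the aggregate path power, which is exactly what the concentration hypothesis controls. In the same coordinates, $\mathbf R_{t,i}^{\rm UE}=\mathbf A\,\mathbb E[\boldsymbol\alpha\boldsymbol\alpha^H]\mathbf A^H$ when the angular support is fixed for the UE, and its trace is $\approx\sum_\ell\mathbb E|\alpha_\ell|^2$. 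This confirms that both sides of \eqref{eq:prop1} live on $\operatorname{span}(\mathbf A)$ and have unit trace.

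The main obstacle is making the approximation in the first display quantitative, since $\mathbb E[\mathbf h\mathbf h^H/\|\mathbf h\|^2]$ differs from $\mathbb E[\mathbf h\mathbf h^H]/\mathbb E\|\mathbf h\|^2$ by a covariance-type term. To handle it, I would set $g=\|\mathbf h\|_2^2$ and $\bar g=\operatorname{tr}(\mathbf R_{t,i}^{\rm UE})$, and write
\begin{equation}
	\mathbf P^{\star}-\frac{\mathbf h\mathbf h^H}{\bar g}=\mathbf P^{\star}\Big(1-\frac{g}{\bar g}\Big).
\end{equation}
Since $\|\mathbf P^{\star}\|_F=1$, Jensen's inequality then bounds the Frobenius error of \eqref{eq:prop1} by $\mathbb E|1-g/\bar g|$. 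This quantity vanishes under the concentration assumption, which states precisely the sense in which ``$\approx$'' holds.
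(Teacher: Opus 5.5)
Your proof is correct, and its core step matches the paper's: replace the random normalizer $\|\mathbf h_{t,i}\|_2^2$ by the constant $\operatorname{tr}(\mathbf R_{t,i}^{\rm UE})$ using the concentration hypothesis, then pull it out of the expectation. The two arguments differ in how they handle the numerator and in how precise they are. The paper first expands $\mathbf h_{t,i}\mathbf h_{t,i}^H\approx\sum_\ell|\alpha_{t,i,\ell}|^2\mathbf a(u_{t,i,\ell})\mathbf a^H(u_{t,i,\ell})$ in the path domain, using path independence and Assumption~\ref{assump:near_orthogonal_paths}. It then recognizes the expectation of this sum as $\mathbf R_{t,i}^{\rm UE}$. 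You instead get $\mathbb E[\mathbf h_{t,i}\mathbf h_{t,i}^H]=\mathbf R_{t,i}^{\rm UE}$ directly from the definition \eqref{eq:ue_level_covariance}. Your main identity therefore needs neither uncorrelated path gains nor the geometric model; those only motivate the concentration step. It also avoids the paper's per-realization approximation, which drops the cross terms $\alpha_{t,i,\ell}\alpha_{t,i,\ell'}^*\mathbf a(u_{t,i,\ell})\mathbf a^H(u_{t,i,\ell'})$. Near-orthogonality removes such terms from $\|\mathbf h_{t,i}\|_2^2$, but not from the rank-one outer product, where they vanish only in expectation. Your identity $\mathbf P^{\star}-\mathbf h\mathbf h^H/\bar g=\mathbf P^{\star}(1-g/\bar g)$, combined with Jensen, turns ``$\approx$'' into the explicit bound $\|\mathbb E[\mathbf P_{t,i}^{\star}]-\widetilde{\mathbf R}_{t,i}^{\rm UE}\|_F\le\mathbb E|1-g/\bar g|\le\sqrt{\operatorname{Var}(g)}/\bar g$. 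So the error is controlled by the relative fluctuation of channel power. When $\mathbf A_{t,i}^H\mathbf A_{t,i}=\mathbf I$ exactly and path powers are deterministic, $g\equiv\bar g$ and the bound is zero, which matches the paper's closing remark on exactness. What the paper's path-domain route adds is mainly interpretation: it shows $\widetilde{\mathbf R}_{t,i}^{\rm UE}$ as a power-weighted mixture of steering-vector projectors. Your second paragraph already gives this as commentary.
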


\begin{IEEEproof}
	See Appendix~\ref{app:lem_full_csi_direction_covariance}.
\end{IEEEproof}
\textbf{Lemma~\ref{lem:full_csi_direction_covariance}} establishes the statistical link between the stored full-CSI direction labels and the local normalized covariance. Each calibration entry therefore provides a phase-aware spatial-direction observation, which becomes informative for Type-II subspace extraction after aggregation over neighboring entries with similar RSRP fingerprints.

\subsection{Local Channel-Covariance Acquisition}\label{subsec:memory_retrieval_fusion}
For a served UE $q$ in target site $t$, the BS obtains its normalized RSRP fingerprint $\tilde{\mathbf r}_{t,q}$ from SSB probing following~\eqref{RSRP}. The site-average covariance $\mathbf R_t$ is coarse for UE-level subspace acquisition because it averages UEs with different angular supports and scattering conditions. The relevant statistic is instead a covariance conditioned on the served UE's normalized RSRP fingerprint, which selects calibration UEs with similar learned-beam power responses. Since large-scale channel power does not change the desired CSI representation subspace, this local statistic is formed after channel-gain normalization. The resulting fingerprint-conditioned local covariance is defined as
\begin{equation}
	\mathbf C_t(\tilde{\mathbf r}_{t,q}) \triangleq \mathbb E_{\mathbf h\sim\mathcal P_t} \left[ \frac{\mathbf h\mathbf h^H}{\|\mathbf h\|_2^2} \,\middle|\, \tilde{\mathbf r}(\mathbf h)\simeq\tilde{\mathbf r}_{t,q} \right].
	\label{eq:local_projector_moment}
\end{equation}
The site-level average of these gain-normalized channel directions removes the channel-power weighting retained by $\mathbf R_t$ in \eqref{eq:site_ue_covariance_mixture}, which is given by
\begin{equation}
	\mathbb E_{\tilde{\mathbf r}} \left[ \mathbf C_t(\tilde{\mathbf r}) \right] = \mathbb E_{\mathbf h\sim\mathcal P_t} \left[ \frac{\mathbf h\mathbf h^H}{\|\mathbf h\|_2^2} \right],
	\label{eq:directional_moment_marginal}
\end{equation}
\begin{assumption}[\emph{Fingerprint-local channel stationarity}]
	\label{assump:stationarity}\normalfont
	Let $\mathcal N(q)\subseteq\mathcal C_t$ denote a sufficiently small neighborhood of the served UE $q$ in the normalized RSRP fingerprint space. For any calibration UE $q'\in\mathcal N(q)$, the normalized UE-level channel covariance varies slowly within this neighborhood, i.e.,
	\begin{equation}
		\widetilde{\mathbf R}_{t,q'}^{\rm UE} \approx \widetilde{\mathbf R}_{t,q}^{\rm UE}.
	\end{equation}
\end{assumption}

Under \textbf{Assumption~\ref{assump:stationarity}}, the local covariance in \eqref{eq:local_projector_moment} has the same dominant subspace as the UE-level normalized covariance, i.e., 
\begin{equation}
	\mathbf C_t(\tilde{\mathbf r}_{t,q}) \approx \widetilde{\mathbf R}_{t,q}^{\rm UE}.
	\label{eq:memory_local_second_moment}
\end{equation}
Let $m$ denote the neighborhood size and $\mathcal N_m(q) \subseteq \mathcal C_t$ denote the nearest calibration indices to $\tilde{\mathbf r}_{t,q}$ by cosine similarity. The local covariance estimate is acquired by averaging their projector labels, which is given by
\begin{equation} \label{ensemble}
	\hat{\mathbf P}_{t,q}^{(m)}(\tilde{\mathbf r}_{t,q}) = \frac{1}{m} \sum_{i\in\mathcal N_m(q)} \mathbf P_{t,i}^{\star}.
\end{equation}
By \textbf{Lemma~\ref{lem:full_csi_direction_covariance}}, each term in~\eqref{ensemble} is a gain-normalized covariance sample. By the local stationarity assumption, neighboring terms correspond to similar local spatial statistics. Therefore, $\hat{\mathbf P}_{t,q}^{(m)}(\tilde{\mathbf r}_{t,q})$ is an empirical estimate of $\mathbf C_t(\tilde{\mathbf r}_{t,q})$.

\begin{theorem}[\emph{Consistency of local covariance acquisition}]\label{thm:memory_covariance_consistency}\normalfont
Under \textbf{Assumption~\ref{assump:stationarity}}, consider a sequence of calibration sets with $|\mathcal C_t|\to\infty$ and neighborhood sizes satisfying $m(|\mathcal C_t|)\to\infty$ and $m/|\mathcal C_t|\to0$. The local projector average in \eqref{ensemble} is then asymptotically unbiased for the UE-level normalized covariance, i.e.,
\begin{equation}
	\mathbb E\!\left[ \hat{\mathbf P}_{t,q}^{(m)}(\tilde{\mathbf r}_{t,q}) \right] \xrightarrow{|\mathcal C_t|\to\infty} \widetilde{\mathbf R}_{t,q}^{\rm UE}.
\end{equation}
The same limit holds for any finite average of neighborhood sizes satisfying the same scaling.
\end{theorem}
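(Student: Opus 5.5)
The proof is a standard $k$-nearest-neighbour consistency argument in the normalized RSRP fingerprint space, combined with \textbf{Lemma~\ref{lem:full_csi_direction_covariance}} and \textbf{Assumption~\ref{assump:stationarity}}. I will model the calibration UEs as i.i.d.\ draws from the target-site population $\mathcal Q_t$. Each draw carries a pair $(\tilde{\mathbf r}_{t,i},\mathbf P_{t,i}^\star)$, and every label is bounded, since $\mathbf P_{t,i}^\star\succeq\mathbf 0$ and $\operatorname{tr}(\mathbf P_{t,i}^\star)=1$, hence $\|\mathbf P_{t,i}^\star\|_F\le 1$. By linearity,
\begin{equation*}
\mathbb E[\hat{\mathbf P}_{t,q}^{(m)}]=\frac1m\sum_{i\in\mathcal N_m(q)}\mathbb E\big[\mathbf P_{t,i}^\star\big].
\end{equation*}
Conditioning on the neighbour keys, each summand becomes $\mathbf C_t(\tilde{\mathbf r}_{t,i})$, the fingerprint-conditioned local covariance in \eqref{eq:local_projector_moment}. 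The problem therefore reduces to showing that the average of $\mathbf C_t$ over the $m$ nearest keys converges to $\widetilde{\mathbf R}^{\rm UE}_{t,q}$.

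\textbf{Step 1: the neighbourhood shrinks.} I will show that the cosine radius of the $m$-th nearest neighbour of $\tilde{\mathbf r}_{t,q}$ tends to zero in probability. The classical $k$-NN lemma applies here: when $m/|\mathcal C_t|\to0$ and $\tilde{\mathbf r}_{t,q}$ lies in the support of the key distribution, every ball of fixed radius eventually holds more than $m$ samples almost surely, by the strong law of large numbers. The keys live on the unit sphere in $\mathbb R^K$, so compactness removes any tail issues.

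\textbf{Step 2: bias control.} Once the neighbours lie in an arbitrarily small fingerprint neighbourhood $\mathcal N(q)$, \textbf{Assumption~\ref{assump:stationarity}} gives $\widetilde{\mathbf R}^{\rm UE}_{t,i}\approx\widetilde{\mathbf R}^{\rm UE}_{t,q}$ for each neighbour. \textbf{Lemma~\ref{lem:full_csi_direction_covariance}} gives $\mathbb E[\mathbf P_{t,i}^\star\mid i]\approx\widetilde{\mathbf R}^{\rm UE}_{t,i}$. Combining the two, each summand's conditional mean lies within $\epsilon(\text{radius})$ of the target, with $\epsilon\to0$; this is the same chain that yields \eqref{eq:memory_local_second_moment}. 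The labels are bounded, so dominated convergence passes the conditional statement to the unconditional expectation. I will read the ``$\approx$'' in the assumption as continuity of $\tilde{\mathbf r}\mapsto\mathbf C_t(\tilde{\mathbf r})$ at $\tilde{\mathbf r}_{t,q}$, with limit $\widetilde{\mathbf R}^{\rm UE}_{t,q}$.

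\textbf{Step 3: the variance remark.} Unbiasedness alone does not need $m\to\infty$; the condition $m\to\infty$ makes the estimator consistent as well. Conditional on the keys the labels are independent and bounded, so $\operatorname{Var}\le 1/m\to0$ and the estimator converges in mean square. The statement about a ``finite average of neighborhood sizes'' follows at once: a convex combination of finitely many sequences with the same limit has that limit.

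\textbf{Main obstacle.} The hardest step is Step 2, because the paper's assumptions are stated only as approximations. The conditional law of a neighbour's label, given that its key lands near $\tilde{\mathbf r}_{t,q}$, must be tied to $\widetilde{\mathbf R}^{\rm UE}_{t,q}$ rather than to a different site mixture. My first approach is to formalize \textbf{Assumption~\ref{assump:stationarity}} as continuity of the conditional-mean map $\mathbf C_t(\cdot)$ at the query. From there, the standard Stone-type argument for $k$-NN regression finishes the proof, with \textbf{Lemma~\ref{lem:full_csi_direction_covariance}} identifying the limiting value.
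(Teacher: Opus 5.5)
Your proposal is correct and follows essentially the same route as the paper's proof in Appendix~\ref{app:thm_memory_covariance_consistency}: the nearest-neighbour radius shrinks because $m/|\mathcal C_t|\to 0$, Assumption~\ref{assump:stationarity} makes the neighbours' normalized covariances approach $\widetilde{\mathbf R}_{t,q}^{\rm UE}$, Lemma~\ref{lem:full_csi_direction_covariance} identifies the mean of each projector label, and the multi-scale claim follows because a finite average of sequences with a common limit shares that limit. Your version formalizes the same argument more tightly in two ways: conditioning on the keys, so that each summand's mean is $\mathbf C_t(\tilde{\mathbf r}_{t,i})$, exposes the neighbour-selection effect that the paper's direct per-neighbour use of Lemma~\ref{lem:full_csi_direction_covariance} passes over; and you correctly note that the paper's law-of-large-numbers step is needed only for consistency (variance at most $1/m$), while the unbiasedness claim itself requires only the shrinking radius.
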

\begin{IEEEproof}
	See Appendix~\ref{app:thm_memory_covariance_consistency}.
\end{IEEEproof}

\textbf{Theorem~\ref{thm:memory_covariance_consistency}} formalizes the role of calibration coverage: as calibrated samples become dense in the target-site fingerprint space, projector averaging recovers the local normalized channel covariance associated with the served UE. In finite calibration sets, the neighborhood size controls a variance--locality tradeoff. SiFo therefore averages several neighborhood estimates,
\begin{equation}
	\bar{\mathbf P}_{t,q}(\tilde{\mathbf r}_{t,q}) = \frac{1}{|\mathcal S_m|} \sum_{m\in\mathcal S_m} \hat{\mathbf P}_{t,q}^{(m)}(\tilde{\mathbf r}_{t,q}),
	\label{eq:multi_scale_memory}
\end{equation}
where $\mathcal S_m$ is a finite set of neighborhood sizes specified in Section~\ref{sec:results}.

\subsection{Confidence-Weighted Fusion and Subspace Extraction}\label{subsec:fusion_subspace_extraction}
With a finite calibration set, the memory estimate is reliable only when the served UE has close calibrated neighbors in fingerprint space. When no close neighbor exists, the pretrained SST2 subspace inference network provides a complementary parametric estimate $\tilde{\mathbf P}_{t,q}(\mathbf r_{t,q},\boldsymbol{\Theta}_t)$, which represents the cross-site learned subspace prior conditioned on the same RSRP observation. For SiFo deployment, $\boldsymbol{\Theta}_t=\boldsymbol{\Theta}_0^\star$. Fine-tuned parameters are used only in the FT-SST2 baseline. The final acquisition rule therefore fuses the parametric and memory estimates through a UE-dependent confidence weight $\alpha_q\in[0,1]$:
\begin{equation}\label{eq:pmix}
	\hat{\mathbf P}_{t,q}^{\rm mix} = (1-\alpha_q) \tilde{\mathbf P}_{t,q}(\mathbf r_{t,q},\boldsymbol{\Theta}_t) + \alpha_q \bar{\mathbf P}_{t,q}(\tilde{\mathbf r}_{t,q}).
\end{equation}
Although the memory labels are projector-valued, the multi-scale average $\bar{\mathbf P}_{t,q}$ and the fused matrix $\hat{\mathbf P}_{t,q}^{\rm mix}$ are covariance estimates before rank reduction. To state the resulting subspace constraint, let
\begin{equation}
	\mathcal P_Q \triangleq \left\{ \mathbf P: \mathbf P=\mathbf P^H,\, \mathbf P^2=\mathbf P,\, \operatorname{rank}(\mathbf P)=Q \right\}
	\label{eq:rank_q_projector_set}
\end{equation}
denote the set of rank-$Q$ orthogonal projectors. After projector averaging and fusion, $\bar{\mathbf P}_{t,q}\notin\mathcal P_Q$ and $\hat{\mathbf P}_{t,q}^{\rm mix}\notin\mathcal P_Q$ in general: they generally lose idempotence and may have rank larger than $Q$. Instead, they represent gain-normalized spatial covariance estimates, from which the protocol-constrained rank-$Q$ CSI subspace is extracted. For any Hermitian matrix $\mathbf A$, let $\boldsymbol\Pi_Q(\mathbf A)\in\mathcal P_Q$ denote any projector onto a dominant $Q$-dimensional eigenspace of $\mathbf A$. Under the prescribed $Q$-dimensional CSI representing-subspace constraint, the final CSI subspace decision is
\begin{equation}\label{eq:final_projector}
	\hat{\mathbf P}_{t,q}^{Q} \triangleq \boldsymbol\Pi_Q\!\left(\hat{\mathbf P}_{t,q}^{\rm mix}\right) = \mathbf U_Q\mathbf U_Q^H,
\end{equation}
where the columns of $\mathbf U_Q\in\mathbb C^{N_t\times Q}$ are the $Q$ eigenvectors of $\hat{\mathbf P}_{t,q}^{\rm mix}$ associated with its $Q$ largest eigenvalues.

\begin{lemma}[\emph{Near-reference rank-$Q$ subspace extraction}]\label{lem:near_reference_rank_q}\normalfont
Define the UE full-CSI Type-II reference projector as
\begin{equation}
	\mathbf P_{t,q}^{\rm ref,Q}
	\triangleq
	\boldsymbol\Pi_Q
	\left(
	\widetilde{\mathbf R}_{t,q}^{\rm UE}
	\right),
	\label{eq:full_csi_rank_q_projector}
\end{equation}
which is the rank-$Q$ subspace that would be selected if the UE-level normalized covariance were known. The projector $\hat{\mathbf P}_{t,q}^{Q}$ in \eqref{eq:final_projector} satisfies
\begin{equation}
	\hat{\mathbf P}_{t,q}^{Q}
	\in
	\argmax_{\mathbf P\in\mathcal P_Q}
	\operatorname{tr}\!\left(
	\mathbf P\hat{\mathbf P}_{t,q}^{\rm mix}
	\right),
	\label{eq:rank_q_power_optimal}
\end{equation}
and equivalently gives the closest element of $\mathcal P_Q$ to $\hat{\mathbf P}_{t,q}^{\rm mix}$ in Frobenius norm. Moreover, its full-CSI reference capture-power loss satisfies
\begin{equation}
	0 \le \operatorname{tr}\!\left[ \left( \mathbf P_{t,q}^{\rm ref,Q} - \hat{\mathbf P}_{t,q}^{Q} \right) \widetilde{\mathbf R}_{t,q}^{\rm UE} \right] \le 2Q \left\| \hat{\mathbf P}_{t,q}^{\rm mix} - \widetilde{\mathbf R}_{t,q}^{\rm UE} \right\|_2 .
	\label{eq:reference_capture_bound}
\end{equation}
\end{lemma}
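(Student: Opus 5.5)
Write $\mathbf A\triangleq\hat{\mathbf P}_{t,q}^{\rm mix}$ and $\mathbf R\triangleq\widetilde{\mathbf R}_{t,q}^{\rm UE}$. Both are Hermitian: $\mathbf A$ is a convex combination of Hermitian covariance-type matrices, and $\mathbf R$ is a normalized covariance. The proof has three parts, handled in order: the variational characterization \eqref{eq:rank_q_power_optimal}, the Frobenius-nearest equivalence, and the capture-loss bound \eqref{eq:reference_capture_bound}.

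\textbf{Step 1 (variational characterization).} Write any $\mathbf P\in\mathcal P_Q$ as $\mathbf P=\mathbf V\mathbf V^H$ with $\mathbf V^H\mathbf V=\mathbf I_Q$. Then $\operatorname{tr}(\mathbf P\mathbf A)=\operatorname{tr}(\mathbf V^H\mathbf A\mathbf V)$. By Ky Fan's maximum principle, this quantity is at most $\sum_{k=1}^{Q}\lambda_k(\mathbf A)$, and equality holds when $\mathbf V$ spans a dominant $Q$-eigenspace. Hence $\boldsymbol\Pi_Q(\mathbf A)=\mathbf U_Q\mathbf U_Q^H$ attains the maximum, which gives \eqref{eq:rank_q_power_optimal}.

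\textbf{Step 2 (Frobenius-nearest projector).} For $\mathbf P\in\mathcal P_Q$ we have $\|\mathbf P\|_F^2=Q$. Expanding the distance gives
\begin{equation*}
\|\mathbf P-\mathbf A\|_F^2=Q+\|\mathbf A\|_F^2-2\operatorname{tr}(\mathbf P\mathbf A).
\end{equation*}
Minimizing this distance over $\mathcal P_Q$ is therefore equivalent to maximizing $\operatorname{tr}(\mathbf P\mathbf A)$, so the equivalence follows directly from Step 1.

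\textbf{Step 3 (capture-loss bound).} The lower bound $0\le\operatorname{tr}[(\mathbf P^{\rm ref,Q}-\hat{\mathbf P}^Q)\mathbf R]$ follows from Step 1 applied to $\mathbf R$, since $\mathbf P^{\rm ref,Q}$ maximizes $\operatorname{tr}(\mathbf P\mathbf R)$ over $\mathcal P_Q$. For the upper bound, set $\mathbf E=\mathbf A-\mathbf R$ and use the add-and-subtract decomposition
\begin{equation*}
\operatorname{tr}[(\mathbf P^{\rm ref,Q}-\hat{\mathbf P}^Q)\mathbf R]=\operatorname{tr}[(\mathbf P^{\rm ref,Q}-\hat{\mathbf P}^Q)\mathbf A]-\operatorname{tr}[(\mathbf P^{\rm ref,Q}-\hat{\mathbf P}^Q)\mathbf E].
\end{equation*}
The first term is $\le 0$, because $\hat{\mathbf P}^Q$ maximizes $\operatorname{tr}(\mathbf P\mathbf A)$ by Step 1. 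For the second term, use the trace duality $|\operatorname{tr}(\mathbf X\mathbf E)|\le\|\mathbf X\|_*\|\mathbf E\|_2$. Since $\|\mathbf P^{\rm ref,Q}-\hat{\mathbf P}^Q\|_*\le\|\mathbf P^{\rm ref,Q}\|_*+\|\hat{\mathbf P}^Q\|_*=2Q$, the second term is at most $2Q\|\mathbf E\|_2$. This yields \eqref{eq:reference_capture_bound}.

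The main potential obstacle is the non-uniqueness of the dominant eigenspace when there are eigenvalue ties. I will address it by noting that the argument uses only the maximizing property from Step 1, which holds for any choice of $\boldsymbol\Pi_Q$. No eigengap or Davis--Kahan-type perturbation bound is needed.

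A sharper constant, $Q$ instead of $2Q$, is available via Weyl's inequality. Ky Fan sums are $Q\|\cdot\|_2$-Lipschitz, so $\operatorname{tr}(\hat{\mathbf P}^Q\mathbf A)\le\operatorname{tr}(\mathbf P^{\rm ref,Q}\mathbf R)+Q\|\mathbf E\|_2$, and $|\operatorname{tr}(\hat{\mathbf P}^Q\mathbf E)|\le Q\|\mathbf E\|_2$. This is not needed for the stated bound.
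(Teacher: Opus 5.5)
Your Steps 1--3 are correct and follow the paper's proof essentially line for line. The paper likewise uses Ky Fan's maximum principle for the variational characterization and the expansion $\|\mathbf A-\mathbf P\|_F^2=\|\mathbf A\|_F^2+Q-2\operatorname{tr}(\mathbf P\mathbf A)$ for the Frobenius equivalence. For the capture-loss bound it uses the same add-and-subtract decomposition, then trace duality with $\|\mathbf P^{\rm ref,Q}-\hat{\mathbf P}^Q\|_*\le 2Q$.

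You should drop your closing remark about a sharper constant $Q$, because it is false. Write the loss as $\operatorname{tr}(\mathbf P^{\rm ref,Q}\mathbf R)-\operatorname{tr}(\hat{\mathbf P}^Q\mathbf A)+\operatorname{tr}(\hat{\mathbf P}^Q\mathbf E)$. Your inequality $\operatorname{tr}(\hat{\mathbf P}^Q\mathbf A)\le\operatorname{tr}(\mathbf P^{\rm ref,Q}\mathbf R)+Q\|\mathbf E\|_2$ only gives a lower bound on the loss. The direction you actually need is $\operatorname{tr}(\hat{\mathbf P}^Q\mathbf A)\ge\operatorname{tr}(\mathbf P^{\rm ref,Q}\mathbf R)-Q\|\mathbf E\|_2$. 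Combined with $|\operatorname{tr}(\hat{\mathbf P}^Q\mathbf E)|\le Q\|\mathbf E\|_2$, it gives $2Q$ again.

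The constant $2Q$ is in fact tight. Take $N_t=2$, $Q=1$, $\mathbf R=\operatorname{diag}(\tfrac12+\delta,\tfrac12-\delta)$ and $\mathbf E=\operatorname{diag}(-\varepsilon,\varepsilon)$ with $\delta<\varepsilon<2\delta$. The dominant eigenvector flips from $\mathbf e_1$ to $\mathbf e_2$, so the loss is $2\delta>\varepsilon=Q\|\mathbf E\|_2$. The ratio of the loss to $Q\|\mathbf E\|_2$ tends to $2$ as $\varepsilon\downarrow\delta$. Block-diagonal copies of this example give the same ratio for any $Q$.
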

\begin{IEEEproof}
	See Appendix~\ref{app:lem_near_reference_rank_q}.
\end{IEEEproof}

\textbf{Lemma~\ref{lem:near_reference_rank_q}} connects the covariance-estimation accuracy of SiFo to the quality of the final CSI subspace decision. The eigenspace extraction in \eqref{eq:final_projector} is optimal for the fused covariance estimate, and its loss relative to the UE full-CSI reference subspace is controlled by the deviation between $\hat{\mathbf P}_{t,q}^{\rm mix}$ and $\widetilde{\mathbf R}_{t,q}^{\rm UE}$. Consequently, calibration-neighbor averaging and confidence-weighted fusion affect the final subspace through the covariance-estimation error in \eqref{eq:reference_capture_bound}.
\vspace{-0.3cm}
\section{Practical Implementation of SiFo}
\label{sec:backbone_deployment}
The preceding sections establish the calibration principle and the projector-memory acquisition rule. This section specifies the deployment protocol of SiFo, separating offline preparation from online CSI subspace acquisition.

\subsection{Offline Pretraining and Calibration Memory Construction}
The offline preparation consists of two components: cross-site pretraining of the SST2 backbone and target-site construction of projector-labeled calibration memory.
\\ \indent
\textbf{Pretrained SST2 backbone:}
The backbone follows the unified site-specific feedback framework of~\cite{Zhao2026UnifiedLF}. It contains a learned SSB probing codebook and a subspace inference network trained jointly through the capture-efficiency objective. Concretely,
\\
\emph{Learned SSB probing codebook:}
$\mathbf B_{\boldsymbol{\Theta}}\in\mathbb C^{N_t\times K}$ is a learned beamforming matrix whose columns define the $K$ SSB probing beams. For UE $q$ with channel $\mathbf h_{s,q}$, the SSB stage produces the RSRP fingerprint $\mathbf r_{s,q}(\mathbf B_{\boldsymbol{\Theta}})$ following~\eqref{RSRP}. This module implements the sensing map $\mathbf h_{s,q}\mapsto \mathbf r_{s,q}(\mathbf B_{\boldsymbol{\Theta}})$ under the prescribed SSB probing budget.
\\
\emph{Subspace inference network:}
$H_{\boldsymbol{\Theta}}:\mathbb R^{K\times 1}\to\mathbb C^{N_t\times Q}$ is a learned inference network whose output defines the $Q$-dimensional CSI representation subspace. For UE $q$ with RSRP fingerprint $\mathbf r_{s,q}(\mathbf B_{\boldsymbol{\Theta}})$, the BS produces $ \hat{\mathbf U}_{s,q} = H_{\boldsymbol{\Theta}}\!\left( \mathbf r_{s,q}(\mathbf B_{\boldsymbol{\Theta}}) \right)$, and the corresponding subspace estimate is $\hat{\mathcal U}_{s,q}=\operatorname{span}(\hat{\mathbf U}_{s,q})$. The associated orthogonal projector is
\begin{equation}
	\tilde{\mathbf P}_{s,q}(\mathbf r_{s,q},\boldsymbol{\Theta}) = \hat{\mathbf U}_{s,q} \left( \hat{\mathbf U}_{s,q}^{H}\hat{\mathbf U}_{s,q} \right)^{-1} \hat{\mathbf U}_{s,q}^{H}.
	\label{eq:parametric_projector_implementation}
\end{equation}
In implementation, $H_{\boldsymbol{\Theta}}$ is a multi-layer perceptron (MLP).
\\ \indent
\textbf{Target-site calibration memory:}
After cross-site pretraining, the pretrained probing codebook $\mathbf B_{\boldsymbol{\Theta}_0^\star}$ is kept fixed for both target-site calibration and online serving. This fixed probing codebook ensures that calibration fingerprints and online query fingerprints are expressed in the same learned-beam coordinates.
\\ \indent
For each calibration UE $i\in\mathcal C_t$ in target site $t$, the BS obtains its normalized RSRP key $\tilde{\mathbf r}_{t,i}$ under $\mathbf B_{\boldsymbol{\Theta}_0^\star}$ according to \eqref{eq:calibration_key}. The full-CSI channel of the same UE is used offline to construct the direction projector $\mathbf P_{t,i}^{\star}$ in \eqref{eq:full_csi_direction_projector}. The two quantities are paired to form the calibration memory $\mathcal Y_t = \left\{ \left( \tilde{\mathbf r}_{t,i}, \mathbf P_{t,i}^{\star} \right) \right\}_{i\in\mathcal C_t}$, where $|\mathcal C_t|$ is the number of target-site calibration UEs.
\vspace{-0.3cm}
\subsection{Online CSI Subspace Acquisition}
For a served UE $q$ at target site $t$, the BS first obtains the RSRP fingerprint $\mathbf r_{t,q}$ and corresponding normalized fingerprint $\tilde{\mathbf r}_{t,q}$ under $\mathbf B_{\boldsymbol{\Theta}_0^\star}$. SiFo then identifies calibrated UEs whose learned-beam response shapes are closest to that of UE $q$. Specifically, the cosine similarity between UE $q$ and calibration UE $i$ is
\begin{equation}
	c_{q,i} = \tilde{\mathbf r}_{t,q}^{T}\tilde{\mathbf r}_{t,i}, \qquad i\in\mathcal C_t .
	\label{eq:online_cosine_similarity}
\end{equation}
For a given neighborhood size $m$, the nearest calibration set is
\begin{equation}
	\mathcal N_m(q) = \operatorname{Top}_m\!\left(\{c_{q,i}\}_{i\in\mathcal C_t}\right),
	\label{eq:nearest_calibration_set}
\end{equation}
where $\operatorname{Top}_m(\cdot)$ returns the indices of the $m$ largest cosine similarities. Their projector labels are averaged according to \eqref{ensemble} and \eqref{eq:multi_scale_memory}, producing the memory covariance estimate $\bar{\mathbf P}_{t,q}(\tilde{\mathbf r}_{t,q})$.
\\ \indent
The parametric branch uses the original SSB observation to generate $\hat{\mathbf U}_{t,q} = H_{\boldsymbol{\Theta}_t} \left( \mathbf r_{t,q}(\mathbf B_{\boldsymbol{\Theta}_t}) 	\right)$ and forms $\tilde{\mathbf P}_{t,q}(\mathbf r_{t,q},\boldsymbol{\Theta}_t)$ from $\hat{\mathbf U}_{t,q}$ as in \eqref{eq:parametric_projector_implementation}. In SiFo, $\boldsymbol{\Theta}_t=\boldsymbol{\Theta}_0^\star$. For the fine-tuning baseline, only the subspace inference network is updated while the probing codebook remains fixed so that calibration fingerprints do not need to be remeasured.
\\ \indent
The fusion rule in \eqref{eq:pmix} is governed by the memory coefficient $\alpha_q$, which determines the relative contribution of the cross-site parametric prior and the target-site projector-memory estimate. Since both terms aim to approximate the same local normalized covariance, the desired coefficient is determined by their relative estimation accuracy. To express the retrieval confidence, let $i_q^{(1)}$ denote the nearest calibrated UE in the normalized RSRP fingerprint space and define
\begin{equation}
	i_q^{(1)}
	=
	\arg\max_{i\in\mathcal C_t} c_{q,i},
	\qquad
	\kappa_q
	=
	\operatorname{clip}\!\left(c_{q,i_q^{(1)}},0,1\right).
	\label{eq:nearest_neighbor_similarity}
\end{equation}
The following proposition formalizes the ideal fusion coefficient under an mean square error (MSE) criterion.

\begin{proposition}[\emph{Optimal fusion weight}]\label{prop:monotone}\normalfont
Let $\tilde{\mathbf P}_{t,q}(\mathbf r_{t,q},\boldsymbol{\Theta}_t)$ and $\bar{\mathbf P}_{t,q}(\tilde{\mathbf r}_{t,q})$ be two estimators of the target local normalized covariance $\widetilde{\mathbf R}_{t,q}^{\rm UE}$. Let their Frobenius MSE be $\sigma_{\rm par}^2$ and $\sigma_{\rm mem}^2(q)$, respectively, and assume the two errors are uncorrelated. For the linear fusion form in \eqref{eq:pmix}, the MSE-optimal memory coefficient is given by
\begin{equation}
	\alpha_q^\star = \frac{\sigma_{\rm par}^2}{\sigma_{\rm par}^2+\sigma_{\rm mem}^2(q)}.
	\label{eq:opt_alpha}
\end{equation}
If $\sigma_{\rm mem}^2(q)$ is non-increasing in the retrieval confidence $\kappa_q$, then $\alpha_q^\star$ is monotone non-decreasing in $\kappa_q$.
\end{proposition}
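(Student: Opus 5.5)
The approach is the standard inverse-variance combination argument, applied to the Frobenius inner product on Hermitian matrices. Write the errors of the two estimators as $\mathbf E_{\rm par}\triangleq\tilde{\mathbf P}_{t,q}(\mathbf r_{t,q},\boldsymbol{\Theta}_t)-\widetilde{\mathbf R}_{t,q}^{\rm UE}$ and $\mathbf E_{\rm mem}\triangleq\bar{\mathbf P}_{t,q}(\tilde{\mathbf r}_{t,q})-\widetilde{\mathbf R}_{t,q}^{\rm UE}$. Since the weights in \eqref{eq:pmix} sum to one, the fused error is the affine combination
\begin{equation*}
\hat{\mathbf P}_{t,q}^{\rm mix}-\widetilde{\mathbf R}_{t,q}^{\rm UE}=(1-\alpha_q)\mathbf E_{\rm par}+\alpha_q\mathbf E_{\rm mem}.
\end{equation*}
Its Frobenius MSE expands as
\begin{equation*}
J(\alpha_q)=(1-\alpha_q)^2\sigma_{\rm par}^2+\alpha_q^2\sigma_{\rm mem}^2(q)+2\alpha_q(1-\alpha_q)\,\mathbb E\big[\operatorname{Re}\operatorname{tr}(\mathbf E_{\rm par}^H\mathbf E_{\rm mem})\big].
\end{equation*}
I interpret ``the two errors are uncorrelated'' to mean exactly that this Frobenius cross moment vanishes. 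This is the one modeling point I would state explicitly. It is natural here because the memory error comes from target-site calibration labels, while the parametric error comes from source-site pretraining, and the two are statistically independent given the fingerprint.

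With the cross term removed, $J$ is a convex quadratic in $\alpha_q$. If $\sigma_{\rm par}^2+\sigma_{\rm mem}^2(q)>0$, it is strictly convex. Setting
\begin{equation*}
J'(\alpha_q)=-2(1-\alpha_q)\sigma_{\rm par}^2+2\alpha_q\sigma_{\rm mem}^2(q)=0
\end{equation*}
gives \eqref{eq:opt_alpha}. Because both variances are nonnegative, this stationary point already lies in $[0,1]$. Hence it is also the minimizer over the admissible range $\alpha_q\in[0,1]$, and no projection or clipping is needed. The degenerate case where both variances are zero makes $J\equiv0$, so any weight is optimal; I would treat it by convention as a remark.

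The monotonicity claim follows by composition. For fixed $\sigma_{\rm par}^2$, the map $x\mapsto\sigma_{\rm par}^2/(\sigma_{\rm par}^2+x)$ is non-increasing on $x\ge0$. Composing it with the non-increasing map $\kappa_q\mapsto\sigma_{\rm mem}^2(q)$ yields a non-decreasing function of $\kappa_q$. If $\sigma_{\rm par}^2=0$, then $\alpha_q^\star\equiv0$, which is trivially monotone.

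There is no real technical obstacle. The only step that needs care is stating precisely what ``uncorrelated'' means for matrix-valued errors, namely the vanishing of $\mathbb E[\operatorname{Re}\operatorname{tr}(\mathbf E_{\rm par}^H\mathbf E_{\rm mem})]$, so that the cross term in $J$ drops out.
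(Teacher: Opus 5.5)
Your proposal is correct and follows essentially the same argument as the paper: the uncorrelatedness assumption removes the cross term, the resulting convex quadratic $(1-\alpha)^2\sigma_{\rm par}^2+\alpha^2\sigma_{\rm mem}^2(q)$ is minimized by setting its derivative to zero, and monotonicity follows because $\alpha_q^\star$ is decreasing in $\sigma_{\rm mem}^2(q)$. Your additions are refinements of that same proof rather than a different route: you state uncorrelatedness precisely as $\mathbb E[\operatorname{Re}\operatorname{tr}(\mathbf E_{\rm par}^H\mathbf E_{\rm mem})]=0$ (the paper's ``cross-covariance term vanishes'' is looser, since the MSE includes bias), you check that the minimizer lies in $[0,1]$, and you handle the degenerate zero-variance cases.
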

\begin{IEEEproof}
	See Appendix~\ref{app:prop_monotone}.
\end{IEEEproof}
The optimal coefficient in \eqref{eq:opt_alpha} depends on estimator error variances that are unavailable during online deployment. \textbf{Proposition~\ref{prop:monotone}} nevertheless identifies $\kappa_q$ as a measurable reliability variable for the memory branch, i.e., under local stationarity, a closer calibrated fingerprint indicates a more accurate local covariance estimate and therefore a larger memory weight. SiFo therefore uses the monotone rule
\begin{equation}\label{eq:alpha_ada}
	\alpha_q = \kappa_q,
\end{equation}
which implements the trend directly from the RSRP-domain quantity already computed for neighbor selection. The fused covariance estimate is then converted into the protocol-constrained rank-$Q$ subspace by \eqref{eq:final_projector}.

\subsection{Online and Offline Overhead}
The overhead reduction enabled by SiFo can be understood by considering where different Type-II subspace acquisition schemes place the acquisition burden. Conv-T2 \cite{3GPP38214} places most of this burden in the online procedure: although no site-specific training is required, the subspace decision relies on CSI-RS refinement, UE-side search, and feedback. SST2 \cite{Zhao2026UnifiedLF} shifts this burden from online UE-side searching and feedback to BS-side inference from $K$ SSB RSRP measurements, thereby substantially reducing CSI-RS transmission and feedback overhead. However, this shift is achieved by training a separate model from a large labeled dataset for each target site. The proposed SiFo keeps the same low-overhead online sensing-and-reporting procedure as SST2, while replacing from-scratch target-site training with reusable cross-site pretraining and a moderate-size projector-labeled calibration memory. \emph{The resulting progression moves the dominant burden from per-UE online acquisition, to per-site model training, and finally to reusable pretraining plus lightweight target-site calibration, reducing the total cost of deploying SST2 at a new site.} The effective-rate impact of this burden shift is evaluated in Section~\ref{subsec:effective_rate}.


\section{Numerical Results}\label{sec:results}
This section first specifies the simulation setup and evaluation protocol, and then reports the main performance validation of the proposed SiFo.

\subsection{Simulation Setup}
\begin{table}[t]
	\small
	\centering
	\caption{Simulation settings}
	\label{tab:sim_setup}
	\begin{tabular}{ccc}
		\hline
		\textbf{Parameter} & \textbf{Description} & \textbf{Value} \\ \hline
		$f_c$ & Carrier frequency & 3.5GHz \\
		$\rm BW$ & Band Width & 10 MHz \\
		$N_t$ & Number of BS antennas & 64 \\
		$K$ & SSB codebook size & 16 \\
		$d$ & Antenna spacing & $\lambda/2$ \\
		$Q$ & Subspace rank  & 4 \\
		$\sigma_{\rm sh}^2$ & Log-variance of shadowing & 1 dB \\
		$\mathcal S_n$ & Noise power spectrum density & -170 dBm/Hz \\
		$P_t$ & Transmit Power & 40dBm \\
		$\mathcal{S}_m$ & Neighborhood size set & \{5, 10, 20\} \\
		$\rho_r$ & Receiving SNR & 10 dB \\
		\hline
	\end{tabular}
\end{table}
Experiments are based on ten DeepMIMO city-scale ray-tracing scenarios \cite{Alkhateeb2019DeepMIMO}, including New York (NY), Los Angeles (LA), Chicago (CHI), Houston (HOU), Phoenix (PHX), Philadelphia (PHL), Miami (MIA), San Diego (SD), Dallas (DAL), and San Francisco (SF). Cross-site pretraining uses leave-one-city-out (LOCO) training: for each target city, the pretrained model is trained on the remaining nine cities and then evaluated on this target city. The main LOCO comparison is reported over all 10 cities, while subsequent ablation studies use SD, the most difficult one as shown in Table~\ref{tab:main_q4}, as the representative city. Each city contributes approximately $20{,}000$ UE realizations over a large outdoor service area in average. The main metric is capture efficiency $\eta=\|\mathbf{P}_{Q}^{(\psi)} \mathbf h\|_2^2 / \|\mathbf h\|_2^2$, computed from the dominant $Q$-dimensional eigenspace returned by acquisition rule $\psi$. Effective rate is also complemented to demonstrate the better overhead-performance trade-off achieved by the proposed framework. Key simulation parameters are listed in Table~\ref{tab:sim_setup}. Unless otherwise stated, simulations are conducted based on these default parameters. In this paper, the number of BS antennas, carrier frequency, bandwidth, antenna spacing, SSB probing budget, and subspace rank are kept identical across deployed sites, so that the evaluation focuses on demonstrating the performance of propagation-domain transfer and target-site calibration efficiency. \\
\indent \textbf{Training:} The model for backbone pretraining is identical to the setup in \cite{Zhao2026UnifiedLF}. Furthermore, pretraining uses the Adam stochastic gradient optimizer \cite{KingmaBA2015Adam} with learning rate $10^{-3}$, batch size 512, and $20{,}000$ steps. Fine-tuning uses the same optimizer with L2-SP regularization \cite{LiHoiem2018L2SP} coefficient $\lambda_{\rm sp}=10^{-3}$. Memory and fine-tuning samples are drawn from the same target-site pool, and test evaluation is performed on the remaining held-out UEs. \\
\indent \textbf{Baselines:} We compare against five reference families. \emph{Conv-T2} is a high-overhead conventional reference, implemented with either a DFT-64 angular basis or an oversampled DFT-256 dictionary with orthogonal matching pursuit (OMP) search at UE. \emph{SST2} is the single-site parametric baseline trained with target-site labeled data. \emph{PT-SST2} is the cross-site pretrained model deployed without any target-site adaptation. \emph{FT-SST2} fine-tunes PT-SST2 with labeled target-site samples and does not use projector memory at inference. \emph{SiFo} denotes the proposed gradient-free target-site specialization with projector-labeled calibration UEs and no target-site fine-tuning. The number attached with the method indicates the number of used samples.

\subsection{Overall CSI Subspace Acquisition Performance}

\begin{table*}[t]
\centering
\caption{LOCO CSI-capture efficiency summary. Underlined values mark the high-overhead Conv-T2 DFT-256 upper reference. Bold values mark the best learning-based result in each city.}
\label{tab:main_q4}
\renewcommand{\arraystretch}{1.1}
\setlength{\tabcolsep}{3.8pt}
\footnotesize
\resizebox{\textwidth}{!}{%
\begin{tabular}{lccccccccccc}
\toprule
\textbf{Scheme} & \textbf{NY} & \textbf{LA} & \textbf{CHI} & \textbf{HOU} & \textbf{PHX} & \textbf{PHL} & \textbf{MIA} & \textbf{SD} & \textbf{DAL} & \textbf{SF} & \textbf{Avg} \\
\midrule
Conv-T2-DFT-64 & 0.8874 & 0.9004 & 0.9314 & 0.8903 & 0.9287 & 0.8989 & 0.9238 & 0.8761 & 0.8935 & 0.8929 & 0.9024 \\
Conv-T2-DFT-256 & \underline{0.9656} & \underline{0.9633} & \underline{0.9843} & \underline{0.9688} & \underline{0.9850} & \underline{0.9647} & \underline{0.9848} & \underline{0.9717} & \underline{0.9675} & \underline{0.9594} & \underline{0.9715} \\
\midrule
SST2-5000 & 0.8670 & 0.8520 & 0.9593 & 0.8044 & 0.9082 & 0.9090 & 0.9353 & 0.7736 & 0.8555 & 0.8676 & 0.8732 \\
SST2-Full & 0.9135 & 0.9068 & 0.9546 & 0.8978 & 0.9486 & 0.9156 & \textbf{0.9641} & 0.8745 & 0.9154 & 0.9048 & 0.9196 \\
\midrule
SiFo-200 & 0.8647 & 0.8630 & 0.9556 & 0.8631 & 0.9389 & 0.9200 & 0.9357 & 0.7954 & 0.8660 & 0.8698 & 0.8872 \\
SiFo-500 & 0.8794 & 0.8787 & 0.9608 & 0.8706 & 0.9436 & 0.9268 & 0.9429 & 0.8203 & 0.8786 & 0.8866 & 0.8988 \\
SiFo-1000 & 0.8903 & 0.8912 & 0.9652 & 0.8789 & 0.9485 & 0.9316 & 0.9490 & 0.8424 & 0.8891 & 0.8988 & 0.9085 \\
SiFo-5000 & \textbf{0.9138} & \textbf{0.9213} & \textbf{0.9735} & \textbf{0.9031} & \textbf{0.9592} & \textbf{0.9408} & 0.9621 & \textbf{0.8847} & \textbf{0.9179} & \textbf{0.9191} & \textbf{0.9292} \\
\bottomrule
\end{tabular}%
}
\end{table*}

Table~\ref{tab:main_q4} summarizes the LOCO CSI-capture results. Target-site labeled budget $B$ indicates the samples available for both memory restoration and fine-tuning. Conv-T2 DFT-256 is included as a high-overhead capture reference, while SiFo uses only the $K=16$ RSRP measurements produced by SSB probing online and performs no target-site gradient update. The performance increases consistently with calibration-memory size: SiFo-200 already exceeds SST2-5000 on average ($0.8872$ vs.\ $0.8732$), SiFo-1000 surpasses the DFT-64 Conv-T2 reference ($0.9085$ vs.\ $0.9024$), and SiFo-5000 reaches $0.9292$, outperforming SST2-Full on average and in nine out of ten cities. These results show that projector-labeled calibration memory uses target-site samples more effectively than retraining a separate parametric RSRP-to-subspace model, while the remaining gap to DFT-256 comes with a much lower online sensing and search burden.

\subsection{Impact of Target-Site Calibration}\label{subsec:budget_ablation}

\begin{figure}[t]
	\centering
	\includegraphics[scale=0.6]{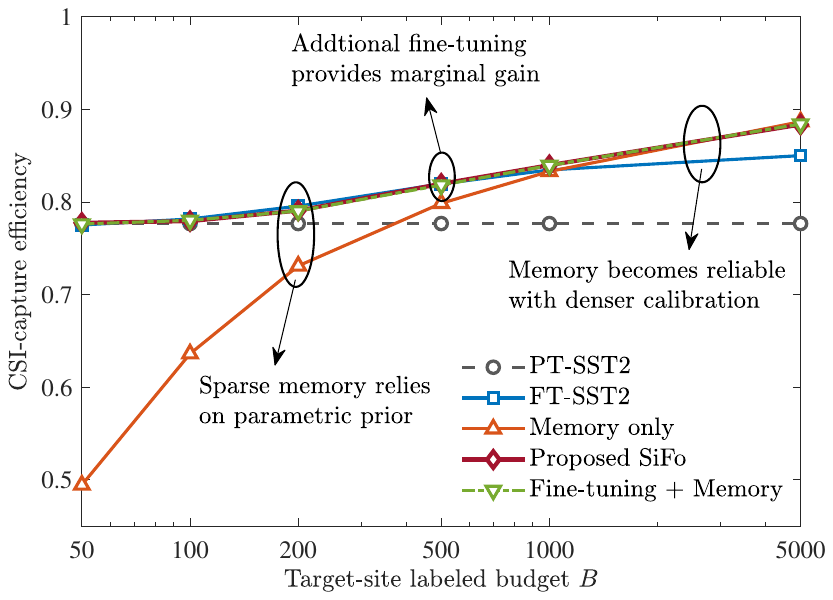}
	\caption{Comparison of different target-site calibration approaches.}
	\label{fig:budget_ablation}
\end{figure}

Fig.~\ref{fig:budget_ablation} verifies the complementarity between the parametric and memory branches. The pretrained model gives a stable but site-agnostic prior, while memory-only adaptation uses full-CSI-based projector evidence but becomes unreliable when calibrated neighbors are sparse. The proposed fusion falls back to the parametric prior at small budgets and increasingly exploits projector memory as calibration coverage improves, yielding both cold-start robustness and large-budget gain. Fine-tuning alone remains limited, and adding fine-tuning to memory brings only marginal improvement over the proposed gradient-free rule. This shows that SiFo uses target-site samples more data-efficiently by retaining them as local projector evidence rather than only updating model parameters.

\subsection{Impact of RSRP Sensing Coordinates}\label{subsec:key_ablation}

\begin{figure}[t]
	\centering
	\includegraphics[scale=0.6]{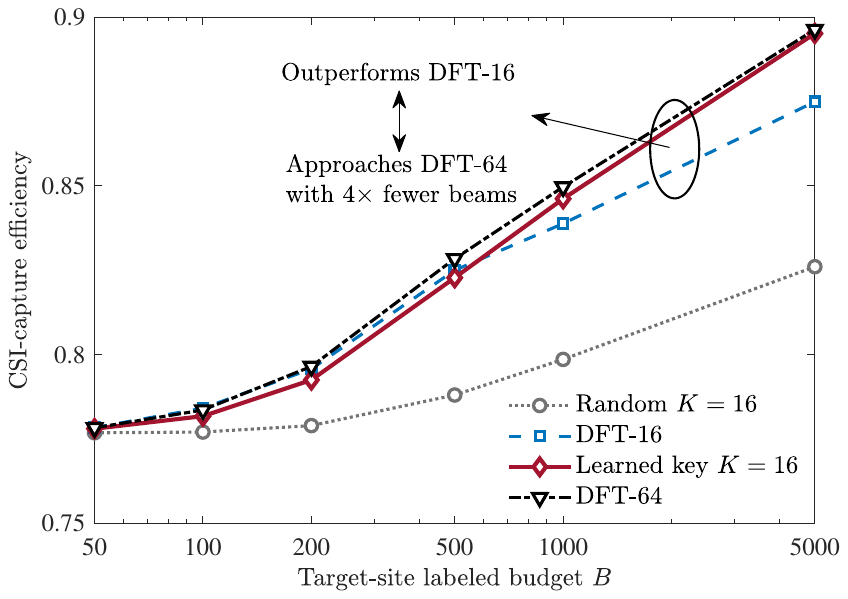}
	\caption{Comparison of different RSRP sensing coordinates.}
	\label{fig:key_ablation}
\end{figure}

Fig.~\ref{fig:key_ablation} isolates the role of the pretrained sensing coordinates. Random $K=16$ beams provide weak retrieval geometry, and the fixed DFT-16 grid is limited by its coarse angular resolution. The learned $K=16$ key consistently improves over DFT-16 at moderate and large calibration budgets, showing that cross-site pretraining shapes the RSRP fingerprint space for target-site neighbor search rather than merely learning a parametric predictor. It also approaches the DFT-64 reference while using four times fewer probing beams, supporting the low-overhead sensing principle in Remark~\ref{rem:learned_codebook_sensing_principle}.
\\ \indent
We further examine the mutual coherence of the learned beams. For a column-normalized codebook $\mathbf B=[\mathbf b_1,\ldots,\mathbf b_K]$, define the average off-diagonal Gram energy as
\begin{equation}
	\bar{\mu}_{\rm G}(\mathbf B) = \frac{1}{K(K-1)} \sum_{i\neq j} \left|\mathbf b_i^H\mathbf b_j\right|^2.
	\label{eq:avg_gram_energy}
\end{equation}
For the SD LOCO simulation, the learned $K=16$ probing codebook gives $\bar{\mu}_{\rm G}(\mathbf B_{\boldsymbol{\Theta}_0^\star})=0.0132$, indicating low average inter-beam correlation and no evident collapse to repeated beam measurements. This confirms the near-orthogonality aspect of the finite-budget probing principle discussed in Remark~\ref{rem:learned_codebook_sensing_principle}.

\subsection{Effective Spectral Efficiency Under Limited Target-Site Data}\label{subsec:effective_rate}

\begin{figure}[t]
	\centering
	\includegraphics[scale=0.6]{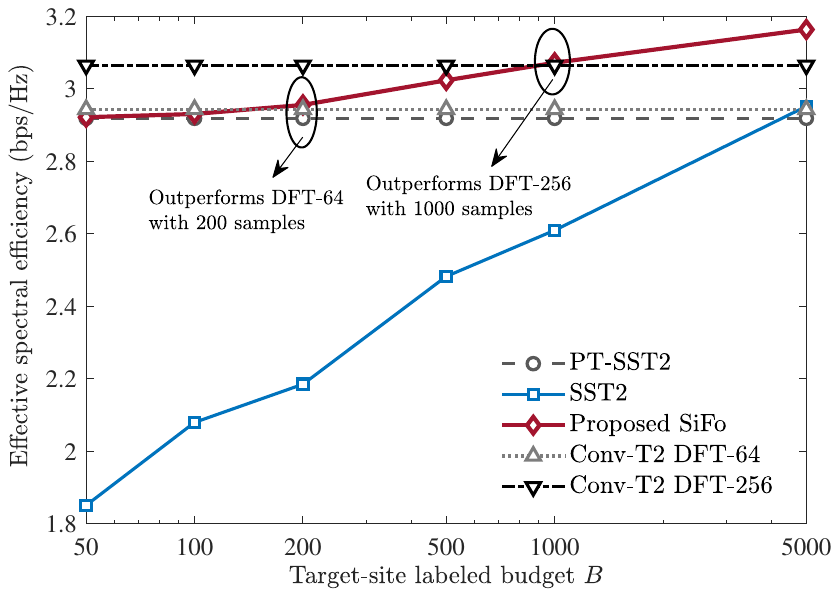}
	\caption{Effective spectral efficiency comparison under limited target-site data.}
	\label{fig:effective_rate}
\end{figure}

Fig.~\ref{fig:effective_rate} evaluates whether the CSI-capture gain translates into overhead-aware throughput. Conv-T2 is penalized by CSI-RS and feedback overhead, whereas SST2 and SiFo use the same low-overhead SSB probing and RSRP reporting procedure. Under this accounting, SST2 needs a larger target-site labeled set before surpassing the conventional Type-II references. SiFo crosses these references with fewer target-site samples because projector memory uses calibration data as local spatial evidence rather than only as training samples for a parametric mapping. This confirms that the data efficiency of SiFo also appears in effective spectral efficiency under practical online overhead.

\section{Conclusion}\label{sec:conclusion}
This paper developed SiFo, a wireless foundation model framework for low-overhead site-specific CSI feedback. The proposed framework combines cross-site pretraining with target-site projector memory. Concretely, the pretrained backbone provides a transferable SSB probing codebook and parametric subspace prior, while calibration memory supplies local full-CSI-based projector evidence for gradient-free target-site specialization. The analysis clarified why RSRP-only sensing cannot specify the phase-aware Type-II subspace, how rank-one full-CSI direction projectors serve as normalized covariance samples, and how neighbor projector averaging and confidence-weighted fusion recover a protocol-constrained rank-$Q$ subspace. Simulations over city-scale deployments showed that SiFo achieves strong CSI-capture performance with substantially lower online overhead than conventional Type-II feedback, uses target-site labeled samples more efficiently than single-site training or fine-tuning-only adaptation, and converts this gain into higher effective spectral efficiency under limited target-site data. These results indicate that foundation model pretraining and projector-labeled calibration memory provide a practical route toward data-efficient and low-overhead site-specific CSI feedback. Future work may extend the framework to online memory updates and broader cross-site adaptation settings.

\appendices
\section{Proof of \textbf{Theorem~\ref{thm:min_budget_isotropic_sensing}}} \label{app1}
The minimum collected sensing energy can be obtained via the Rayleigh-Ritz theorem as
\begin{equation}
	\min_{\|\mathbf h\|_2=1}\|\mathbf S^H\mathbf h\|_2^2 = \lambda_{\min}(\mathbf S\mathbf S^H).
\end{equation}
Since $J=N_t$ and all sensing beams have unit norm, $\operatorname{tr}(\mathbf S\mathbf S^H) = \operatorname{tr}(\mathbf S^H\mathbf S) = N_t$. Thus, the average eigenvalue of $\mathbf S\mathbf S^H$ is one, which implies $\lambda_{\min}(\mathbf S\mathbf S^H)\leq 1$. Hence, the optimal worst-case sensing energy is at most one.
\\\indent
Equality holds if and only if all eigenvalues of $\mathbf S\mathbf S^H$ are equal to one, because their average is one and their minimum is one. This is equivalent to $\mathbf S\mathbf S^H=\mathbf I_{N_t}$. Conversely, any feasible $\mathbf S$ satisfying this condition yields $\|\mathbf S^H\mathbf h\|_2^2 = \mathbf h^H\mathbf h = 1$ for every unit-norm $\mathbf h$, and therefore attains the upper bound.
\\\indent
Finally, the DFT-$N_t$ codebook is unitary, i.e., $\mathbf D_{N_t}\mathbf D_{N_t}^H=\mathbf I_{N_t}$. It therefore satisfies $\mathbf S\mathbf S^H=\mathbf I_{N_t}$. Its entries also have constant modulus, so it is compatible with the standard phase-only analog beam constraint. The proof is thus completed.

\section{Proof of \textbf{Lemma~\ref{lem:full_csi_direction_covariance}}}\label{app:lem_full_csi_direction_covariance}
Under path independence and the approximate orthogonality of steering vectors in \textbf{Assumption~\ref{assump:near_orthogonal_paths}}, the following results hold for each channel vector:
\begin{equation}
	\mathbf h_{t,i}\mathbf h_{t,i}^H
	\approx
	\sum_\ell
	|\alpha_{t,i,\ell}|^2
	\mathbf a(u_{t,i,\ell})\mathbf a^H(u_{t,i,\ell}).
\end{equation}
The corresponding channel power satisfies $\|\mathbf h_{t,i}\|_2^2\approx \sum_\ell |\alpha_{t,i,\ell}|^2$. Under the path-power concentration assumption, $\|\mathbf h_{t,i}\|_2^2\approx\operatorname{tr}(\mathbf R_{t,i}^{\rm UE})$ with high probability. Taking expectations and applying this concentration approximation to the denominator gives
\begin{equation}
	\mathbb E\!\left[
	\frac{\mathbf h_{t,i}\mathbf h_{t,i}^H}{\|\mathbf h_{t,i}\|_2^2}
	\right]
	\approx
	\frac{
	\sum_\ell
	\mathbb E[|\alpha_{t,i,\ell}|^2]
	\mathbf a(u_{t,i,\ell})\mathbf a^H(u_{t,i,\ell})
	}
	{\operatorname{tr}(\mathbf R_{t,i}^{\rm UE})}.
\end{equation}
Since the numerator is $\mathbf R_{t,i}^{\rm UE}$ under the same channel model, we have
\begin{equation}
	\mathbb E\!\left[
	\frac{\mathbf h_{t,i}\mathbf h_{t,i}^H}{\|\mathbf h_{t,i}\|_2^2}
	\right]
	\approx
	\frac{\mathbf R_{t,i}^{\rm UE}}
	{\operatorname{tr}(\mathbf R_{t,i}^{\rm UE})}
	=
	\widetilde{\mathbf R}_{t,i}^{\rm UE}.
\end{equation}
In the deterministic path-power case $|\alpha_{t,i,\ell}|^2\equiv\bar\sigma_{t,i,\ell}^2$, the above approximation becomes exact under the near-orthogonal angular model. The proof is thus completed.

\section{Proof of \textbf{Theorem~\ref{thm:memory_covariance_consistency}}}\label{app:thm_memory_covariance_consistency}
As $|\mathcal C_t|\to\infty$ with $m\to\infty$ and $m/|\mathcal C_t|\to0$, each nearest-neighbor neighborhood radius in the normalized RSRP fingerprint space shrinks to zero while the number of averaged neighbors grows. By \textbf{Assumption~\ref{assump:stationarity}}, all neighbors in the shrinking neighborhood share the same normalized UE-level covariance in the limit, i.e., $\widetilde{\mathbf R}_{t,i}^{\rm UE}\to\widetilde{\mathbf R}_{t,q}^{\rm UE}$. Applying \textbf{Lemma~\ref{lem:full_csi_direction_covariance}} to each full-CSI direction projector and then the law of large numbers to the growing average yields
\begin{equation}
	\mathbb E\!\left[ \hat{\mathbf P}_{t,q}^{(m)}(\tilde{\mathbf r}_{t,q}) \right] \xrightarrow{|\mathcal C_t|\to\infty} \widetilde{\mathbf R}_{t,q}^{\rm UE}.
\end{equation}
A finite average of several neighborhood estimators with the same limit has the same limit, which proves the multi-scale statement. The proof is thus completed.

\section{Proof of \textbf{Lemma~\ref{lem:near_reference_rank_q}}}\label{app:lem_near_reference_rank_q}
Let $\mathbf T_q=\widetilde{\mathbf R}_{t,q}^{\rm UE}$, $\mathbf A_q=\hat{\mathbf P}_{t,q}^{\rm mix}$, and $\mathbf E_q=\mathbf A_q-\mathbf T_q$. Since $\mathbf A_q$ is Hermitian, the Ky Fan maximum principle~\cite{HornJohnson2012MatrixAnalysis} gives
\begin{equation}
	\boldsymbol\Pi_Q(\mathbf A_q)
	\in
	\argmax_{\mathbf P\in\mathcal P_Q}
	\operatorname{tr}(\mathbf P\mathbf A_q),
\end{equation}
which proves \eqref{eq:rank_q_power_optimal}. For any $\mathbf P\in\mathcal P_Q$, $\mathbf P^2=\mathbf P$ and $\operatorname{tr}(\mathbf P)=Q$, so maximizing $\operatorname{tr}(\mathbf P\mathbf A_q)$ is equivalent to minimizing $\|\mathbf A_q-\mathbf P\|_F^2=\|\mathbf A_q\|_F^2+Q-2\operatorname{tr}(\mathbf P\mathbf A_q)$ over $\mathcal P_Q$.

Define
\begin{equation}
	\mathcal L_q
	=
	\operatorname{tr}\!\left[
	\left(
	\mathbf P_{t,q}^{\rm ref,Q}
	-
	\hat{\mathbf P}_{t,q}^{Q}
	\right)
	\mathbf T_q
	\right].
\end{equation}
Since $\mathbf P_{t,q}^{\rm ref,Q}$ maximizes $\operatorname{tr}(\mathbf P\mathbf T_q)$ over $\mathcal P_Q$, $\mathcal L_q\ge0$. Moreover, because $\hat{\mathbf P}_{t,q}^{Q}$ maximizes $\operatorname{tr}(\mathbf P\mathbf A_q)$ over $\mathcal P_Q$,
\begin{align}
	\mathcal L_q
	&=
	\operatorname{tr}\!\left[
	\left(
	\mathbf P_{t,q}^{\rm ref,Q}
	-
	\hat{\mathbf P}_{t,q}^{Q}
	\right)
	\mathbf A_q
	\right]
	-
	\operatorname{tr}\!\left[
	\left(
	\mathbf P_{t,q}^{\rm ref,Q}
	-
	\hat{\mathbf P}_{t,q}^{Q}
	\right)
	\mathbf E_q
	\right] \\
	&\le
	\left|
	\operatorname{tr}\!\left[
	\left(
	\mathbf P_{t,q}^{\rm ref,Q}
	-
	\hat{\mathbf P}_{t,q}^{Q}
	\right)
	\mathbf E_q
	\right]
	\right|.
\end{align}
By trace duality between the nuclear norm and the spectral norm,
\begin{equation}
	\mathcal L_q
	\le
	\left\|
	\mathbf P_{t,q}^{\rm ref,Q}
	-
	\hat{\mathbf P}_{t,q}^{Q}
	\right\|_*
	\|\mathbf E_q\|_2
	\le
	2Q\|\mathbf E_q\|_2,
\end{equation}
because both projectors have nuclear norm $Q$. The proof is thus completed.

\section{Proof of \textbf{Proposition~\ref{prop:monotone}}}\label{app:prop_monotone}
Under uncorrelated estimator errors, the cross-covariance term vanishes. The Frobenius MSE of $\hat{\mathbf P}_{t,q}^{\rm mix}=(1-\alpha)\tilde{\mathbf P}_{t,q}(\mathbf r_{t,q},\boldsymbol{\Theta}_t)+\alpha\bar{\mathbf P}_{t,q}(\tilde{\mathbf r}_{t,q})$ with respect to $\widetilde{\mathbf R}_{t,q}^{\rm UE}$ is $(1-\alpha)^2\sigma_{\rm par}^2+\alpha^2\sigma_{\rm mem}^2(q)$. This is a convex quadratic in $\alpha$. Setting its derivative to zero gives $\alpha_q^\star=\frac{\sigma_{\rm par}^2}{\sigma_{\rm par}^2+\sigma_{\rm mem}^2(q)}$.
If $\sigma_{\rm mem}^2(q)$ is non-increasing in $\kappa_q$, then $\alpha_q^\star$ is non-decreasing in $\kappa_q$ because $\alpha_q^\star$ is a decreasing function of $\sigma_{\rm mem}^2(q)$. The proof is thus completed.

\bibliographystyle{IEEEtran}
\bibliography{reference/mybib}

@techreport{3GPP38214,
	author      = {{3GPP}},
	title       = {{NR}; Physical Layer Procedures for Data},
	institution = {3rd Generation Partnership Project (3GPP)},
	type        = {Technical Specification},
	number      = {TS 38.214},
	year        = {2018}
}

@ARTICLE{Love2008LimitedFeedback,
	author={Love, David J. and Heath, Robert W. and N. Lau, Vincent K. and Gesbert, David and Rao, Bhaskar D. and Andrews, Matthew},
	journal={IEEE J. Sel. Areas Commun.}, 
	title={An overview of limited feedback in wireless communication systems}, 
	year={2008},
	month={Oct.},
	volume={26},
	number={8},
	pages={1341-1365},
	doi={10.1109/JSAC.2008.081002}}

@ARTICLE{Giordani2019BeamManagementNR,
	author={Giordani, Marco and Polese, Michele and Roy, Arnab and Castor, Douglas and Zorzi, Michele},
	journal={IEEE Commun. Surv. Tutorials}, 
	title={A Tutorial on Beam Management for {3GPP} {NR} at mmWave Frequencies}, 
	year={2019},
	volume={21},
	number={1},
	month={Sep.},
	pages={173-196},
	doi={10.1109/COMST.2018.2869411}}

@ARTICLE{Fu2023TutorialCodebooks,
	author={Fu, Xiaotian and Le Ruyet, Didier and Visoz, Raphael and Ramireddy, Venkatesh and Grossmann, Marcus and Landmann, Markus and Quiroga, Wilmar},
	journal={IEEE Access}, 
	title={A Tutorial on Downlink Precoder Selection Strategies for {3GPP} {MIMO} Codebooks}, 
	year={2023},
	volume={11},
	month={Dec.},
	pages={138897-138922},
	doi={10.1109/ACCESS.2023.3338866}}

@ARTICLE{Dreifuerst2024MLCodebook,
	author={Dreifuerst, Ryan M. and Heath, Robert W.},
	journal={IEEE Trans. Wireless Commun.}, 
	title={Machine Learning Codebook Design for Initial Access and {CSI} Type-{II} Feedback in Sub-6-{GHz} {5G NR}}, 
	year={2024},
	month={Jun.},
	volume={23},
	number={6},
	pages={6411-6424},
	doi={10.1109/TWC.2023.3331313}}

@ARTICLE{Dreifuerst2025NeuralCodebook,
	author={Dreifuerst, Ryan M. and Heath, Robert W.},
	journal={IEEE Trans. Wireless Commun.}, 
	title={Neural Codebook Design for {MIMO} Network Beam Management}, 
	year={2025},
	month={May},
	volume={24},
	number={5},
	pages={3909-3922},
	doi={10.1109/TWC.2025.3536290}}

@ARTICLE{Heng2022SiteSpecificProbing,
	author={Heng, Yuqiang and Mo, Jianhua and Andrews, Jeffrey G.},
	journal={IEEE Trans. Wireless Commun.}, 
	title={Learning Site-Specific Probing Beams for Fast mmWave Beam Alignment}, 
	year={2022},
	month={Jan.},
	volume={21},
	number={8},
	pages={5785-5800},
	doi={10.1109/TWC.2022.3143121}}

@INPROCEEDINGS{Ning2023RSRPCodebook,
	author={Ning, Xinzhi and Zhang, Shutao and Xue, Ye and Zheng, Xi and Shi, Qingjiang and Chang, Tsung-Hui},
	booktitle={Proc. {IEEE} Int. Workshop Signal Process. Adv. Wireless Commun. (SPAWC)}, 
	title={Learning Beams Adaptive to the Environment: An {RSRP}-based Codebook Design}, 
	year={2023},
	volume={},
	number={},
	pages={521-525},
	doi={10.1109/SPAWC53906.2023.10304486}}

@ARTICLE{Heng2024GridFree,
	author={Heng, Yuqiang and Andrews, Jeffrey G.},
	journal={IEEE Trans. Wireless Commun.}, 
	title={Grid-Free {MIMO} Beam Alignment Through Site-Specific Deep Learning}, 
	year={2024},
	month={Feb.},
	volume={23},
	number={2},
	pages={908-921},
	doi={10.1109/TWC.2023.3283475}}

@ARTICLE{Wu2024CKMBeamforming,
	author={Wu, Di and Zeng, Yong and Jin, Shi and Zhang, Rui},
	journal={IEEE Trans. Wireless Commun.}, 
	title={Environment-Aware Hybrid Beamforming by Leveraging Channel Knowledge Map}, 
	year={2024},
	month={Oct.},
	volume={23},
	number={5},
	pages={4990-5005},
	doi={10.1109/TWC.2023.3323941}}

@ARTICLE{Zeng2024CKMTutorial,
	author={Zeng, Yong and Chen, Junting and Xu, Jie and Wu, Di and Xu, Xiaoli and Jin, Shi and Gao, Xiqi and Gesbert, David and Cui, Shuguang and Zhang, Rui},
	journal={IEEE Commun. Surv. Tutorials}, 
	title={A Tutorial on Environment-Aware Communications via Channel Knowledge Map for 6G}, 
	year={2024},
	month={Feb},
	volume={26},
	number={3},
	pages={1478-1519},
	doi={10.1109/COMST.2024.3364508}}

@article{SSBFMAG2,
	title={Generative Site-Specific Beamforming for Next-Generation Spatial Intelligence},
	author={Wang, Zhaolin and Zhou, Zihao and Zhao, Cheng-Jie and Liu, Yuanwei},
	journal={arXiv preprint arXiv:2601.02301},
	year={2026}
}

@article{sim,
	title={Generative Site-Specific Beamforming via Information-Maximizing Codebook},
	author={Zhao, Cheng-Jie and Wang, Zhaolin and Liu, Yuanwei},
	journal={arXiv preprint arXiv:2602.12552},
	year={2026}
}

@ARTICLE{Ayach2014SpatiallySparse,
	author={Ayach, Omar El and Rajagopal, Sridhar and Abu-Surra, Shadi and Pi, Zhouyue and Heath, Robert W.},
	journal={IEEE Trans. Wireless Commun.},
	title={Spatially Sparse Precoding in Millimeter Wave {MIMO} Systems},
	year={2014},
	month={Mar.},
	volume={13},
	number={3},
	pages={1499-1513},
	doi={10.1109/TWC.2014.011714.130846}
}

@inproceedings{Ngo2014Favorable,
	author    = {Hien Quoc Ngo and Erik G. Larsson and Thomas L. Marzetta},
	title     = {Aspects of Favorable Propagation in Massive {MIMO}},
	booktitle = {Proc. 22nd Eur. Signal Process. Conf. (EUSIPCO)},
	pages     = {76--80},
	year      = {2014}
}

@techreport{3GPP38211,
	author      = {{3GPP}},
	title       = {{NR}; Physical Channels and Modulation},
	institution = {3rd Generation Partnership Project (3GPP)},
	type        = {Technical Specification},
	number      = {TS 38.211},
	year        = {2018}
}

@techreport{3GPP38215,
	author      = {{3GPP}},
	title       = {{NR}; Physical Layer Measurements},
	institution = {3rd Generation Partnership Project (3GPP)},
	type        = {Technical Specification},
	number      = {TS 38.215},
	year        = {2018}
}

@ARTICLE{Larsson2014MassiveMIMO,
	author={Larsson, Erik G. and Edfors, Ove and Tufvesson, Fredrik and Marzetta, Thomas L.},
	journal={IEEE Commun. Mag.},
	title={Massive {MIMO} for Next Generation Wireless Systems},
	year={2014},
	month={Feb.},
	volume={52},
	number={2},
	pages={186-195},
	doi={10.1109/MCOM.2014.6736761}
}

@INPROCEEDINGS{Alkhateeb2019DeepMIMO,
	author={Alkhateeb, Ahmed},
	booktitle={Proc. Inf. Theory Appl. Workshop (ITA)},
	title={{DeepMIMO}: A Generic Deep Learning Dataset for Millimeter Wave and Massive {MIMO} Applications},
	year={2019},
	month={Feb.},
	pages={1-8},
	address={San Diego, CA}
}

@article{Bommasani2021Foundation,
  author  = {Bommasani, Rishi and others},
  title   = {On the Opportunities and Risks of Foundation Models},
  journal = {arXiv preprint arXiv:2108.07258},
  year    = {2021}
}

@ARTICLE{WirelessLLM2024,
  author={Shao, Jiawei and Tong, Jingwen and Wu, Qiong and Guo, Wei and Li, Zijian and Lin, Zehong and Zhang, Jun},
  journal={J. Commun. Inf. Netw.},
  title={{WirelessLLM}: Empowering Large Language Models Towards Wireless Intelligence},
  year={2024},
  month={Jun.},
  volume={9},
  number={2},
  pages={99-112},
  doi={10.23919/JCIN.2024.10582827}
}

@ARTICLE{WiFo2024,
  author={Liu, Boxun and Gao, Shijian and Liu, Xuanyu and Cheng, Xiang and Yang, Liuqing},
  journal={Sci. China Inf. Sci.},
  title={{WiFo}: Wireless Foundation Model for Channel Prediction},
  year={2025},
  month={May},
  volume={68},
  number={6},
  pages={162302},
  doi={10.1007/s11432-025-4349-0}
}

@ARTICLE{Liu2026WiFoCF,
  author={Liu, Xuanyu and Gao, Shijian and Liu, Boxun and Cheng, Xiang and Yang, Liuqing},
  journal={IEEE Trans. Wireless Commun.},
  title={{WiFo-CF}: Wireless Foundation Model for {CSI} Feedback},
  year={2026},
  volume={25},
  pages={15039-15053},
  doi={10.1109/TWC.2026.3681311}
}

@article{Sheng2025WirelessFM,
  title={A Wireless Foundation Model for Multi-Task Prediction},
  author={Sheng, Yucheng and Wang, Jiacheng and Zhou, Xingyu and Liang, Le and Ye, Hao and Jin, Shi and Li, Geoffrey Ye},
  journal={arXiv preprint arXiv:2507.05938},
  year={2025}
}

@article{Alikhani2024LWM,
  author={Alikhani, Sadjad and Charan, Gouranga and Alkhateeb, Ahmed},
  title={Large Wireless Model: Foundation Model for Wireless Channels},
  journal={arXiv preprint arXiv:2411.08872},
  year={2024}
}

@ARTICLE{Yang2025WirelessGPT,
  author={Yang, Tingting and Zhang, Ping and Zheng, Mengfan and Shi, Yuxuan and Jing, Liwen and Huang, Jianbo and Li, Nan},
  journal={IEEE Netw.},
  title={{WirelessGPT}: A Generative Pre-Trained Multi-Task Learning Framework for Wireless Communication and Sensing},
  year={2025},
  volume={39},
  number={5},
  pages={58-65},
  doi={10.1109/MNET.2025.3579496}
}

@ARTICLE{Aboulfotouh2025WavesFM,
  author={Aboulfotouh, Ahmed and Mohammed, Elsayed and Abou-Zeid, Hatem},
  journal={IEEE Open J. Commun. Soc.},
  title={{6G WavesFM}: A Multimodal Foundation Model for Sensing, Communication, and Localization},
  year={2025},
  volume={6},
  pages={6792-6807},
  doi={10.1109/OJCOMS.2025.3600616}
}

@ARTICLE{Zhou2026SpectrumFM,
  author={Zhou, Fuhui and Liu, Chunyu and Zhang, Hao and Wu, Wei and Wu, Qihui and Quek, Tony Q. S. and Chae, Chan-Byoung},
  journal={IEEE J. Sel. Areas Commun.},
  title={{SpectrumFM}: A Foundation Model for Intelligent Spectrum Management},
  year={2026},
  volume={44},
  number={2},
  pages={4471-4488},
  doi={10.1109/JSAC.2025.3644783}
}

@ARTICLE{Mashaal2026IQFM,
  author={Mashaal, Omar Ahmad and Abou-Zeid, Hatem},
  journal={IEEE Open J. Commun. Soc.},
  title={{IQFM}: A Foundation Model for Wireless Multi-Antenna {IQ} Streams},
  year={2026},
  volume={7},
  pages={3483-3501},
  doi={10.1109/OJCOMS.2026.3661435}
}

@unpublished{Zhao2026UnifiedLF,
  author={Zhao, Cheng-Jie and Wang, Zhaolin and Zhao, Zongyao and Liu, Yuanwei},
  title={Bridging Standardized Codebook and Site-Specific Beamforming: A Unified Limited-Feedback Framework},
  note={Manuscript},
  year={2026}
}

@inproceedings{LiHoiem2018L2SP,
  author    = {Li, Xuhong and Grandvalet, Yves and Davoine, Franck},
  title     = {Explicit Inductive Bias for Transfer Learning with Convolutional Networks},
  booktitle = {Proc. Int. Conf. Mach. Learn. (ICML)},
  year      = {2018},
  pages     = {2825--2834}
}

@inproceedings{KingmaBA2015Adam,
  author    = {Kingma, Diederik P. and Ba, Jimmy},
  title     = {Adam: A Method for Stochastic Optimization},
  booktitle = {Proc. Int. Conf. Learn. Represent. (ICLR)},
  year      = {2015}
}

@book{HornJohnson2012MatrixAnalysis,
  author    = {Horn, Roger A. and Johnson, Charles R.},
  title     = {Matrix Analysis},
  edition   = {2nd},
  publisher = {Cambridge University Press},
  year      = {2012}
}
\end{document}